\documentclass[12pt]{article}
\usepackage{amssymb}
\usepackage{amsmath}
\usepackage{amsfonts}
\usepackage{natbib}
\usepackage{setspace}
\usepackage{geometry}
\usepackage{graphicx}
\usepackage{hyperref}
\usepackage{appendix}
\usepackage{caption}
\usepackage{subcaption}
\usepackage{float}

\newtheorem{theorem}{Theorem}

\newtheorem{corollary}{Corollary}

\newtheorem{definition}{Definition}

\newtheorem{lemma}{Lemma}

\newtheorem{proposition}[theorem]{Proposition}

\def \indt{\mathbin{\vbox{\baselineskip=0pt\lineskip=0pt
  \moveright2.5pt\hbox{$\|$}
  \hrule height 0.2pt width 10pt}}}

\input{tcilatex}
\begin{document}

\title{The Projection Solution to the Incidental Parameter Problem\thanks{We thank participants at numerous seminar and conference presentations of related work for their  comments, and especially St\'{e}phane Bonhomme, Allan Collard-Wexler, Bo Honor\'{e}, and Francesca Molinari for helpful discussion.}}
\author{Andrew Chesher \\
UCL and CeMMAP\thanks{%
Address: Department of Economics, University College
London, Gower Street, London WC1E 6BT, United Kingdom. Email:
andrew.chesher@ucl.ac.uk.} \and Adam M. Rosen \\
Duke University and CeMMAP\thanks{%
Address: Adam Rosen, Department of Economics, Duke University, 213 Social
Sciences Box 90097, Durham, NC 27708; Email: adam.rosen@duke.edu.} \and %
Yuanqi Zhang \\
UCL and CeMMAP\thanks{%
Address: Department of Economics, University College London, Gower Street,
London WC1E 6BT, United Kingdom. Email: uctpyqz@ucl.ac.uk.}}
\maketitle

\begin{abstract}
This paper introduces a new approach to econometric analysis of nonlinear panel data
models when the number of observations per observational unit is small. In such models the presence of variables that are constant
within, while varying across, units results in an incidental parameter
problem. The approach taken in this paper removes these incidental parameters \textit{via} projection, which produces a correspondence specifying all
combinations of observed variables and within-unit-varying unobserved
heterogeneity that are achievable by choice of some value of the
unit-specific incidental parameters.  With
unit-specific variables removed, there is no need for assumptions concerning
their joint distribution with other variables. The result is an incomplete model which is
typically partially identifying. Identified sets are characterized \textit{via} moment inequalities using tools of random set theory. Examples of application to
static and dynamic models with discrete or continuous outcomes using distribution-free restrictions on within-unit-varying unobserved heterogeneity are
presented.
\end{abstract}

\section{Introduction}

In econometric analysis using \emph{linear} panel data models, unit-specific heterogeneity terms, so-called
\textquotedblleft fixed effects\textquotedblright , can be removed by
differencing. This enables identification and inference without the need for restrictions
on the covariation of fixed effects with observed explanatory variables and
other unobserved variables, and absent parametric distributional restrictions on unobservable heterogeneity.

By contrast, in almost all \emph{nonlinear} panel data models, differencing
of observed variables or functions thereof does not remove fixed effects. This may not be an issue when observational
units deliver many realizations of outcomes, enabling ``Large-$T$''
analysis. In such cases well-behaved estimators of the values of fixed
effects and the values of parameters common to observational units may be
available.

However, when there are few realizations per unit, as is often the case in
economic data, and the values of fixed effects are estimated, estimators of
common parameters can be very poorly behaved. This is the incidental
parameter problem set out in \cite{Neyman/Scott:48} and reviewed in \cite{Lancaster:00}.

With very few exceptions, approaches to this problem in nonlinear panel models
restrict the joint dependence of fixed effects and other
heterogeneity terms. In many empirical settings such restrictions may not be appropriate. For example, in the context of production function
estimation, firm-specific heterogeneity could be related to managerial
ability affecting both the level and variability of output. It may thus be desirable to allow the covariation of firm-specific and within-unit-varying heterogeneity, upon which measures of total factor productivity depend, to be unrestricted. The moment restrictions of \cite{Arellano/Bond:91} used in the analysis of dynamic \textit{linear} panel models have this feature, without imposing parametric restrictions on the distribution of unobservable heterogeneity. The  approach in this paper enables analysis of \textit{nonlinear} panel models that similarly leave the covariation of fixed effects with other heterogeneity unrestricted, and is applicable without imposing parametric distributional restrictions.



The approach taken here solves the incidental parameter problem by removing them from the model \textit{via} projection. The approach is fundamentally different from others, including the functional differencing approach of \cite{Bonhomme:12}. Functional differencing finds moments that are invariant to the conditional distribution of individual effects given covariates, effectively \textquotedblleft
differencing out\textquotedblright\ the \textit{conditional distribution} of individual effects given covariates. The functional differencing approach is only applicable in models with a parametric specification for the distribution of outcomes conditional on covariates and individual effects. The approach in this paper requires no such restrictions, because it instead projects out the individual effects themselves. With
unit-specific heterogeneity removed, restrictions on its joint distributionß with
other variables are irrelevant.


This paper's projection approach allows great flexibility in the restrictions on unobservable heterogeneity for which identification analysis can be conducted, including nonparametric specifications of the distribution of within-unit-varying heterogeneity. The paper demonstrates with examples that feature moment conditions as in conventional GMM analysis, independence restrictions, conditional quantile restrictions, and pairwise exchangeability restrictions. The projection approach is not tied to any specific type of distributional restriction and can admit many possibilities beyond the specific cases considered here.

To explain this it helps to bring some
notation on board. Consider a panel data model for outcomes $Y\equiv (Y_{1},\dots ,Y_{T})$ with
explanatory variables $X\equiv (X_{1},\dots ,X_{T})$.\footnote{It is straightforward to allow $T$ to vary across
units. In many panel applications $t$ is an index for time, but $t$ could index a group, family, classroom, etc.}
Let $U\equiv (U_{1},\dots ,U_{T})$ denote unobserved heterogeneity varying
within units. Let $V$ denote unit-specific unobserved heterogeneity \emph{not%
} varying within units. Each element, $Y_{t}$, $X_{t}$, $V$ and $U_{t}$ can
be multidimensional.\footnote{Specific to each observational-unit $i$ there is $Y_{it}, X_{it}, V_i, U_{it}$. Observational-unit-specific indices $i$ are omitted to simplify notation.}

Panel data models restrict the functional relationship
satisfied by $Y$, $X$, $V$ and $U$, defining sets of feasible
values that these variables can simultaneously take. The approach proposed here
works with the projection of these sets onto the space of $(Y,X,U)$. This is
the set of values of $Y$, $X$, and $U$ that can be achieved by choice of one
or more values of $V$. Various restrictions on the joint distribution of $Y$%
, $X$, and $U$ can then be considered. With $V$ removed, econometric analysis
can proceed with no restrictions on its joint distribution with other
variables.

In the linear model the projection approach delivers the set of compatible
values of $Y$, $X$ and $U$ as follows.
\begin{equation*}
\{(y,x,u):\forall t\in \{2,\dots ,T\}\quad y_{t}-y_{1}=(x_{t}-x_{1})\beta
+u_{t}-u_{1}\}
\end{equation*}
This set is defined by \emph{equalities}, and the model is complete for
differences in the outcome variables. In nonlinear models the set of
compatible values of $Y$, $X$ and $U$ is typically defined by \emph{inequalities} and there may be no nontrivial functions of outcomes for which the
post-projection model is complete.

The paper shows how such projections can be used in identification analysis
of nonlinear panel models using tools of random set theory, previously employed for identification analysis in \cite{Beresteanu/Molchanov/Molinari:09} and \cite{Chesher/Rosen:17}.\footnote{Knowledge of that theory is not required to apply the results.} A novelty of the analysis here is the application of these tools to models that feature restrictions common in panel contexts but that
are inherently absent in cross sectional settings, such as models with dynamics and models with weak exogeneity restrictions. Employing these different types of restrictions for identification analysis using random set theory is new to this paper. Identified sets for model parameters so-obtained are characterized by moment inequalities, enabling estimation and inference using approaches from the recent literature.\footnote{For example, approaches developed in \cite{Andrews/Shi:17}, \cite{Chernozhukov/Chetverikov/Kato:19}, \cite{Bai/Santos/Shaikh:22}, and \cite{Marcoux/Russell/Wan:24} can be used for asymptotic inference with uncountably many conditional moment inequalities, see also the survey \cite{Shi:25}. Characterizations based on a finite number of moment inequalities are amenable to even more approaches, see for example the recent guide \cite{Canay/Illanes/Velez:26}.}

The paper's main contribution is the projection approach to nonlinear panel models, offering a general approach for removal of incidental parameters which is not tied to any one specific kind of model (e.g. binary response) or distributional restriction. Specific examples considered here show how the analysis delivers several contributions to the nonlinear panel literature, including the following.

\begin{enumerate}

\item Strict and weak exogeneity restrictions allowing feedback can both be accommodated. This speaks to the emphasis in \cite{Chamberlain:22}, \cite{Bonhomme/Dano/Graham:23}, and \cite{Bonhomme:25} on
the importance of relaxing strict exogeneity in panel models.\footnote{\cite{Chamberlain:22} is a posthumously published version of a 1993 working paper.} Recent developments in panel models with weak exogeneity include extension of the functional differencing approach of \cite{Bonhomme:12} to nonlinear models with weak exogeneity in \cite{Bonhomme/Dano/Graham:25}, and partial identification of functionals of the distribution of heterogeneous individual-specific coefficients in linear panel models in \cite{Lee:26}. The analysis in this paper contributes by allowing for weak exogeneity without placing any restrictions on the joint distribution of individual effects and $t$-varying heterogeneity, for example through the use of moment restrictions as in Section \ref{Sec: Identified sets}. 


\item There is flexible treatment of initial conditions in dynamic models, distinct from the random effects treatment in \cite{honore2006bounds}. Distributional restrictions can be conditional or unconditional on the initial condition. Unlike previous treatments, initial conditions can be unobserved since they are then unit-specific heterogeneity terms that can be removed using this paper's projection approach, as demonstrated in Section \ref{Section: Binary Outcome Panels}.

\item Dynamic models in which values of discrete outcomes
depend upon lagged \emph{latent} variables can be accommodated. This is
useful in models in which continuous unobserved variables are coded into
ordered categories as arises, for example, in studies of well-being or
health status, as demonstrated in Section \ref{Section: Ordered Outcome Panels}. Previous papers on ordered response panel models such as \cite{honore2021dynamic} have allowed dependence on \textit{observable} lagged outcomes, but not on lagged latent variables that determine the discrete outcomes.\footnote{As \cite{honore2021dynamic} note, whether it is more appropriate to model lagged dependence on the discrete outcome or on the continuous latent variable depends on the process being studied; see footnote 2 and Appendix D in that paper. We thank Bo Honor\'{e} for calling this to our attention.} 
\end{enumerate}

The paper proceeds as follows. Section \ref{sec: Model and Identification} sets out the class of panel models studied and defines a projection of the set of feasible values of $Y$, $X$, $V$ and $U$ onto the space of $Y$, $X$ and $U$. Identified sets of structures are characterized using random level sets of this projection.

Section \ref{Sec: continuous outcomes} presents two examples of panel data models for continuous outcomes
and derives identified sets of structural features under restrictions on the
correlations amongst $t$-varying heterogeneity and covariates. One example
involves a linear model with censored outcomes, covariates, or both. The
other example concerns a CES production function in which the elasticity of
substitution which appears in the production function in a nonlinear fashion
is firm-specific. The moment restrictions lead to characterizations of
identified sets in terms of Aumann expectations of random sets, and the use of their support functions to obtain moment inequalities.

Section \ref{Sec: Discrete Outcomes} presents examples of dynamic panel data models for discrete
outcomes and shows how to obtain identified sets for common parameters. This leads to characterizations of identified
sets of structural features defined by Artstein's inequalities. There is a
novel treatment of unobserved initial conditions and other missing values. A
new method for accommodating autoregressive latent indexes in dynamic
ordered outcome models is proposed, this by contrast to the commonly
employed approaches in which outcomes follow an autoregressive process. Bounds on parameters of a dynamic binary outcome panel model are derived under quantile independence restrictions and under a conditional exchangeability restriction on the distribution of $U$ absent a parametric specification of that distribution. Section \ref{Sec: Related Lit Discrete Outcomes} discusses the related literature on binary and ordered panel models, and the approaches used to deal with incidental parameters in those models.

Section \ref{Sec: Numerical Illustrations} presents numerical illustrations for the CES production function model introduced in Section \ref{Sec: continuous outcomes} under both strict and weak exogeneity restrictions. Section \ref{sec: conclusion} concludes.

\section{The General Approach}

\label{sec: Model and Identification}

This section first lays out the class of models covered, and then provides a
general set identification characterization that will later be specialized
to specific models to produce moment inequalities usable for estimation and
inference.

\subsection{Model, Notation, and Sampling Process}

Consider a panel model specifying that 
\begin{equation}  \label{panel specification}
Y_{t} \in f(Y^{t-1},X_{t},V,U_{t}),\qquad t \in [T] \equiv\{1,...,T\},
\end{equation}
for some fixed and finite $T$, where $Y^{t-1}$ denotes the vector of values
of $Y_s$ for $s<t$.\footnote{%
Static models are accommodated by specifying $f$ invariant with respect to $%
Y^{t-1}$. In dynamic models initial conditions, such as $Y_0$ in a model
with a one period lag, may either be observable, in which case these are
included in $Y^{t-1}$, or unobservable, in which case they are included in $V
$.} Models allowing multi-valued functions $f$ are accommodated by 
\eqref{panel specification}, thus allowing endogenous explanatory variables
and models admitting multiple equilibria. The function $f$, all of whose arguments may be vectors, is restricted to belong to a set of functions $\mathcal{F}$, which
may be parametrically or nonparametrically specified.

The panel models studied here additionally impose restrictions on the joint
distribution of $U$ and $X$. To incorporate such restrictions, notation $G_{U|X}\left(\cdot|x \right)$ is used to denote a conditional distribution
of $U$ conditional on $X = x$, where for any set $\mathcal{S} \subseteq \mathcal{R}_U$, $G_{U|X}\left(\mathcal{S}|x\right)$ denotes the probability of the
event $U \in \mathcal{S}$ given $X = x$. Notation $G_{U|X}\left(\cdot|\cdot\right)$ denotes a collection of conditional distributions for $U$ given $X=x
$ across all possible values of $x \in \mathcal{R}_{X}$. Throughout the paper notation $\mathcal{R}_A$ denotes the support of any random vector $A$.

When considering the restrictions imposed by a model, notation $\mathsf{G}%
_{U|X}$ is used to denote the family of collections $G_{U|X}\left(\cdot|\cdot\right)$ admitted by the model. For instance, if the components of $U$
are restricted to have zero mean conditional on certain components of $X$
then $\mathsf{G}_{U|X}$ contains all such $G_{U|X}\left(\cdot|\cdot\right)$.

A sampling process delivers realizations of $Y$ and $X$ such that their
joint distribution, $F_{YX}$, is identified. Realizations of $V$ and $U
\equiv \left(U_{1},...,U_{T} \right)$ are not observed. The former is
invariant with respect to $t$. Variables $X \equiv
\left(X_{1},...X_{T}\right)$ have components whose covariation with $t$-varying unobservable heterogeneity $U$ is restricted. The underlying probability space on which all variables are defined is assumed nonatomic throughout.\footnote{This is a mild technical requirement on the underlying probability space that ensures convexity of the Aumann expectation of the sets $\mathcal{Q}(Y,X;\theta)$ defined in Section \ref{Sec: continuous outcomes}. It is assured to hold whenever $U$ is absolutely continuously distributed with respect to Lebesgue measure. See \cite{Beresteanu/Molchanov/Molinari:09} for further discussion.}

Notation $\mathcal{M}$ will be used to denote a set of pairs $m$ of
structural functions and collections of conditional distributions $m=(f,G_{U|X}(\cdot|\cdot))$ that satisfy a model's restrictions. The goal of our
identification analysis is to determine which, if any, $m \in \mathcal{M}$ are
capable of producing a distribution $F_{YX}$ of observable variables.

\subsection{Identification Analysis}

Using the notation laid out above, the restrictions so far described are
formally collected in the following restriction.

\noindent \textbf{Restriction Panel Model (PM)}: Euclidean random vectors $
\left( Y,X,V,U\right) $ are defined on a complete nonatomic probability space $\left( \Omega ,\mathsf{L},\mathbb{P}\right) $ endowed with the Borel sets on $\Omega$ such
that (\ref{panel specification}) holds, where $(f,G_{U|X}(\cdot|\cdot))\in \mathcal{M}$  and $V$ belongs to the set $\mathcal{R}_V$. The distribution of $(Y,X)$ is point identified. $\square $

Equivalent to \eqref{panel specification}, $Y \in \mathcal{Y}(U,V,X;f)$
almost surely, where 
\begin{equation}
\mathcal{Y}(u,v,x;f) \equiv \left\{y=(y_1,...,y_T):y_t \in
f(y^{t-1},x_t,v,u_t), \text{ all } t\in [T] \right\}\text{.}
\end{equation}
The following defines the identified set of structures,
denoted $\mathcal{I}(\mathcal{M},F_{YX})$, delivered by a model $\mathcal{M}$
and distribution $F_{YX}$.

\begin{definition}
\label{Def: ID Set definition}  Under Restriction PM the identified set of
pairs of structural functions $f$ and conditional distributions of
unobservable heterogeneity $G_{U|X}(\cdot|\cdot)$ is  
\begin{multline}  \label{ID set definition equation}
\mathcal{I}(\mathcal{M},F_{YX})\equiv \left\{
\left(f,G_{U|X}(\cdot|\cdot)\right)\in \mathcal{M}:F_{Y|X}(\cdot|x)\preceq 
\mathcal{Y}(\tilde{U},\tilde{V},X;f)\right. \text{ for some } (\tilde{U},%
\tilde{V}) \\
\left. \text{where } \tilde{U} \sim G_{U|X}(\cdot|x) \text{ conditional on }%
X=x\quad \text{a.e.}\quad x\in \mathcal{R}_{X}\right\}\text{.}
\end{multline}
\end{definition}

For any random vector $A$ with distribution $F_{A}$ and random set $\mathcal{A}$, $F_{A}\preceq \mathcal{A}$ denotes that $F_{A}$ is selectionable with
respect to the distribution of $\mathcal{A}$.\footnote{The probability distribution of random variable $A$ is selectionable with
respect to the probability distribution of random set $\mathcal{A}$ when
there exists (i) $\tilde{A}$ having the same distribution as $A$, and (ii) $\widetilde{\mathcal{A}}$ having the same distribution as $\mathcal{A}$, both
defined on the same probability space, such that $\mathbb{P}[\tilde{A}\in 
\widetilde{\mathcal{A}}]=1$. See \cite{Molchanov/Molinari:18} Chapter 2 or Definition 2 of \cite{Chesher/Rosen:Handbook2020}.
} The definition states that the identified set comprises those $\left(f,G_{U|X}(\cdot|\cdot)\right)$ pairs in $\mathcal{M}$ for which there exist random
vectors $(\tilde{U},\tilde{V})$ with $\tilde{U}$ following the distributions
of $G_{U|X}(\cdot|\cdot)$ such that the set of outcomes produced by the
structural function $f$, namely $\mathcal{Y}(\tilde{U},\tilde{V},X;f)$
contains a random vector whose conditional distributions given $X$ match those
of the observed conditional distributions $\{F_{Y|X}(\cdot|x):x\in \mathcal{R}_X\}$ almost surely. Then, and only then, the pair $\left(f,G_{U|X}(\cdot|%
\cdot)\right)$ are capable of producing the distribution $F_{YX}$.

Definition \ref{Def: ID Set definition} follows \cite{Chesher/Rosen:Handbook2020}, here incorporating both types of unobservables 
$U$ and $V$, but it is not directly usable for estimation and inference.
Making it usable requires two steps, as follows. First, it is shown that the
unrestricted individual effects can be removed without loss of identifying
power. Second, results from random set theory are used to yield
characterizations that take the form of moment inequalities that
can be used as a basis for estimation and inference.

\subsubsection{Removing Individual Effects}

Individual effects are removed from the model by use of the projection 
\begin{equation}  \label{Projection Definition}
\mathcal{R}_{YXU}(f) \equiv \left\{(y,x,u) \in \mathcal{R}_{YXU}: \exists
v\in\mathcal{R}_V \text{ s.t. } y_t \in f(y^{t-1},x_t,v,u_t), \text{ all } t
\in [T] \right\}\text{,}
\end{equation}
which is the set of values of observable variables and $t$-varying
unobservable heterogeneity that are mutually compatible when the structural
function is $f$.

Level sets of the projection $\mathcal{R}_{YXU}(f)$ obtained by fixing a
subset of the elements $(Y,X,U)$ are used for identification analysis. For
any $(u,x,f) \in \mathcal{R}_{U} \times \mathcal{R}_{X} \times \mathcal{F}$ 
\begin{equation}  \label{Yset definition}
\mathcal{Y}(u,x;f) \equiv \left\{y \in \mathcal{R}_Y: (y,x,u) \in \mathcal{R}%
_{YXU}(f) \right\} \text{,}
\end{equation}
is the set of possible values for $y$ that can occur when $X=x$ and $U=u$
for some realization of $V$. For any $(y,x,f) \in \mathcal{R}_Y \times 
\mathcal{R}_X \times \mathcal{F}$ 
\begin{equation}  \label{Uiset definition}
\mathcal{U}\left(y,x;f\right) \equiv \left\{u\in \mathcal{R}_U:(y,x,u) \in 
\mathcal{R}_{YXU}(f) \right\}\text{,}
\end{equation}
is the set of possible values for $u$ that can occur when $Y=y$ and $X=x$ for some realization of $V$.
These sets are dual to each other in that for any $u,x,y$, and 
$f$: 
\begin{equation}  \label{dual UYset relationship}
y \in \mathcal{Y}(u,x;f) \iff u \in \mathcal{U}\left(y,x;f\right)\text{.}
\end{equation}
There is the following Proposition.

\begin{proposition}
\label{theorem: profile V} Let Restriction PM hold. Then  
\begin{multline}
\mathcal{I}(\mathcal{M},F_{YX})= \left\{
\left(f,G_{U|X}(\cdot|\cdot)\right)\in \mathcal{M}:F_{Y|X}(\cdot|x)\preceq 
\mathcal{Y}(U,X;f)\right. \text{where } U \sim G_{U|X}(\cdot|x) \\
\left. \text{ conditional on }X=x\quad \text{a.e.}\quad x\in \mathcal{R}%
_{X}\right\}\text{,}  \label{Ystar selectionability}
\end{multline}
and equivalently,  
\begin{multline}
\mathcal{I}(\mathcal{M},F_{YX})= \left\{
\left(f,G_{U|X}(\cdot|\cdot)\right)\in \mathcal{M}:G_{U|X}(\cdot|x)\preceq 
\mathcal{U}(Y,X;f)\right. \text{where } Y \sim F_{Y|X}(\cdot|x) \\
\left. \text{ conditional on }X=x\quad \text{a.e.}\quad x\in \mathcal{R}%
_{X}\right\}\text{.}  \label{Ustar selectionability}
\end{multline}
\end{proposition}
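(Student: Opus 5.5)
The plan is to prove the first characterization \eqref{Ystar selectionability} by establishing two inclusions between the set in Definition \ref{Def: ID Set definition} and the right-hand side. The second characterization \eqref{Ustar selectionability} will then follow from the duality \eqref{dual UYset relationship}.

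\emph{First inclusion: $\mathcal{I}(\mathcal{M},F_{YX})$ is contained in the right-hand side.} Take $(f,G_{U|X})$ in $\mathcal{I}(\mathcal{M},F_{YX})$. By definition there are $(\tilde U,\tilde V)$, with $\tilde U\sim G_{U|X}(\cdot|x)$ given $X=x$, and a selection $\tilde Y$ with $\tilde Y \sim F_{Y|X}(\cdot|x)$ and $\tilde Y\in\mathcal{Y}(\tilde U,\tilde V,X;f)$ almost surely. For each realization, $\tilde Y\in\mathcal{Y}(\tilde U,\tilde V,X;f)$ means $\tilde Y_t\in f(\tilde Y^{t-1},X_t,\tilde V,\tilde U_t)$ for all $t$. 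Hence $(\tilde Y,X,\tilde U)\in\mathcal{R}_{YXU}(f)$, with witness $v=\tilde V$, and so $\tilde Y\in\mathcal{Y}(\tilde U,X;f)$. The pointwise containment $\mathcal{Y}(u,v,x;f)\subseteq\mathcal{Y}(u,x;f)$ therefore transfers the selection directly. (Support issues, such as $(y,x,u)\in\mathcal{R}_{YXU}$, hold almost surely.)

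\emph{Second inclusion: the right-hand side is contained in $\mathcal{I}(\mathcal{M},F_{YX})$.} Suppose there exist $U\sim G_{U|X}(\cdot|x)$ and a selection $\tilde Y$ of $\mathcal{Y}(U,X;f)$ with the right conditional law. We must construct a random $\tilde V$ such that $\tilde Y\in\mathcal{Y}(U,\tilde V,X;f)$ almost surely. For each $\omega$ the set
\[
\mathcal{V}(\omega)\equiv\{v\in\mathcal{R}_V:\tilde Y_t\in f(\tilde Y^{t-1},X_t,v,U_t)\ \forall t\}
\]
is nonempty by the definition of the projection. We will define $\tilde V$ as a measurable selection of $\mathcal{V}$. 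This is the main obstacle: existence of a measurable selection requires some regularity. I would first try the graph of $\mathcal{V}$: if $f$ has a Borel graph, then $\{(\omega,v):v\in\mathcal{V}(\omega)\}$ is in $\mathsf{L}\otimes\mathcal{B}$. Completeness of the probability space (assumed in Restriction PM) then lets us invoke the Aumann/von Neumann measurable selection theorem to obtain a measurable $\tilde V$ almost surely in $\mathcal{V}(\omega)$. Because $V$ is completely unrestricted in $\mathcal{M}$, no distributional constraint on $\tilde V$ needs to be checked, and $\tilde U=U$ retains its required law. The conclusion is $(f,G_{U|X})\in\mathcal{I}(\mathcal{M},F_{YX})$.

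\emph{Second characterization.} Selectionability is a statement about a joint law of $(Y,X,U)$ with the prescribed marginals $F_{YX}$ and $G_{U|X}$ that places probability one on $\mathcal{R}_{YXU}(f)$. By \eqref{dual UYset relationship}, the events $\{Y\in\mathcal{Y}(U,X;f)\}$ and $\{U\in\mathcal{U}(Y,X;f)\}$ coincide. Hence $F_{Y|X}\preceq\mathcal{Y}(U,X;f)$ with $U\sim G_{U|X}$ holds if and only if there is a coupling of $(Y,U)$ given $X$ with these marginals supported on the projection, which holds if and only if $G_{U|X}\preceq\mathcal{U}(Y,X;f)$ with $Y\sim F_{Y|X}$. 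Nonatomicity of the probability space guarantees that either coupling can be realized on the same space, which yields \eqref{Ustar selectionability}.
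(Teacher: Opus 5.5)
Your proposal follows essentially the same route as the paper's proof. The forward inclusion uses the pointwise containment $\mathcal{Y}(u,v,x;f)\subseteq\mathcal{Y}(u,x;f)$, the reverse inclusion builds $\tilde V$ from the nonempty set of witnessing values $v$, and \eqref{Ustar selectionability} follows from the duality \eqref{dual UYset relationship}.

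The only difference is that you justify the measurable choice of $\tilde V$, using a Borel-graph condition on $f$ and the von Neumann--Aumann selection theorem on the complete space; that regularity is something the paper's argument tacitly needs too, whereas the paper simply asserts that $\tilde V$ exists. Your appeal to nonatomicity in the last step is unnecessary, since the same pair $(\tilde Y,\tilde U)$ witnesses both selectionability statements.
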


Proposition \ref{theorem: profile V} provides high-level characterizations of
the identified set of $\left(f,G_{U|X}(\cdot|\cdot)\right)$ from which
unobservable individual effects $V$ are absent. The random sets $\mathcal{Y}(U,X;f)$ and $\mathcal{U}(Y,X;f)$ comprise, respectively, the set of random variables $Y$ compatible with structural function $f$ and $(U,X)$, and the
set of random variables $U$ possible given knowledge of
observable variables $(Y,X)$ under structural function $f$.\footnote{The selectionability statement requiring $G_{U|X}(\cdot|x)$ to be selectionable with respect to the distribution of $\mathcal{U}(Y,X;f)$ conditional on $X$ almost surely in \eqref{Ustar selectionability} in Proposition \ref{theorem: profile V} is equivalent to requiring that $(U,X)$ is selectionable with respect to the distribution of the random set $ \mathcal{U}(Y,X;f) \times \{X\}$ by Proposition 1 in Appendix B of \cite{Chesher/Rosen:15}. An analagous statement holds regarding selectionability of $F_{Y|X}(\cdot|x)$ with respect to the distribution of $\mathcal{Y}(U,X;f)$ almost surely in \eqref{Ystar selectionability}, and selectionability of $(Y,X)$ with respect to the distribution of $ \mathcal{Y}(U,X;f) \times \{X\}$. } 

When the function $f$ is restricted to a parametric family indexed by a
parameter, say $\theta$, such that $\mathcal{F} = \left\{f_{\theta}:\theta
\in \Theta \right\}$, then the set-valued mappings defined in (\ref{Yset
definition}) and (\ref{Uiset definition}) will be indexed by $\theta$ rather
than $f$, 
with the associated random sets denoted $\mathcal{Y}\left(U,X;\theta
\right)$ and $\mathcal{U}\left(Y,X;\theta \right)$.

\subsubsection{Characterization via Moment Inequalities}

Proposition \ref{theorem: profile V} provides high-level, generally applicable
characterizations of the identified set for $\left(f,G_{U|X}(\cdot|\cdot)\right)$ in which no distributional restrictions are placed on unobservable
individual effects $V$.\footnote{Characterization of the identified set for $f$ or any functional of $\left(f,G_{U|X}(\cdot|\cdot)\right)$ then follows.} To make these
characterizations practicable requires  necessary and sufficient
conditions for the stated selectionability properties usable for estimation and inference. The recent
literature employing random set theory for identification analysis has made
use of such conditions, which can be employed here as well.\footnote{For further details and alternative conditions for guaranteeing this \textit{selectionability} property see Chapter 2 of \cite{Molchanov/Molinari:18}.}
One approach uses Artstein's inequality under the following mild restriction.
\medskip

\noindent \textbf{Restriction RCS}: For all $\left(f,G_{U|X}(\cdot|\cdot)\right) \in \mathcal{M}$, $G_{U|X}\left( \mathsf{cl}\left(\mathcal{U}(y,x;f)\right) \setminus \mathcal{U}(y,x;f) |x\right) = 0$ $a.e.\text{ } (y,x)$,
where $\mathsf{cl}(\cdot)$ and $\cdot\setminus \cdot$ denote the closure of
a set, and the set difference of two sets, respectively. $\square$  \medskip

Restriction RCS holds automatically if $\mathcal{U}(Y,X;f)$ is closed
almost surely. It also applies when that is not so, but
the difference between $\mathcal{U}(Y,X;f)$ and its closure is measure zero almost surely.\footnote{This is a common occurrence in models in which inequalities determine the
value of a limited dependent variable, for example when a discrete outcome is
determined by whether a continuously distributed unobservable variable exceeds a threshold.} In this case the restriction enables characterization of identified sets by applying results from random set theory to the closure of $\mathcal{U}(Y,X;f)$, which is useful since selectionability criteria from random set theory are often stated for random closed sets.  Lemma 1 in the Appendix provides the formal statement.

\subsubsection{Additional Notation}

For any random vector $A$ or realized value $a$ let $A^{\Delta}_{st} \equiv
A_s - A_t $ and $a^{\Delta}_{st} \equiv a_s - a_t$. For any integer $k>0$, notation $\mathbf{0}_{k}$ denotes a zero vector of length $k$. In the definition of any
set, such as $\mathcal{U}(y,x;f)$ in (\ref{Uiset definition}) above, the
support of a variable is omitted when it is clear from context. Notation
expressing suprema and infima of conditional probabilities or expectations
with respect to the conditioning variables are to be understood as \textit{%
essential} suprema and infima, respectively. For any real number $c$, $c^{-}
\equiv \lvert \min\{c,0\} \rvert $ and $c^{+}\equiv \max\{c,0\}$ denote the
negative and positive part of $c$, respectively. For random vectors $A$
and $B$, $A 
\mathbin{\vbox{\baselineskip=0pt\lineskip=0pt
  \moveright2.5pt\hbox{$\|$}
  \hrule height 0.2pt width 10pt}} B$ signifies that $A$ and $B$ are
stochastically independent. For any vectors $a$ and $b$, $a \cdot b$ denotes their dot product.

\section{Models with continuous outcomes}\label{Sec: continuous outcomes}

This section considers two models with essential nonlinearity and continuous
outcomes. It is shown how the projection approach leads to identification
results and so to estimation and inference absent restrictions involving the
distribution of unit-specific variables and absent a parametric
specification of the distribution of within-unit-varying unobservables.
Identification results are obtained under moment restrictions involving within-unit-varying unobservables and functions of covariates.

In Sections \ref{Sec: CES model} and \ref{Sec: Interval Censored Panels} the
results of projection are derived. In Section \ref{Sec: Identified sets} identification analysis using these projections is provided.

\subsection{\label{Sec: CES model}CES Production Function Panel Models}

This is an example of a nonlinear panel model with continuous
outcomes, inspired by Example 3 of \cite{Bonhomme:12}. Let log output $Y_{t}$
of an individual unit (e.g. a plant or firm) at time $t$ be generated by a
constant elasticity of substitution (CES) production function such that

\begin{equation}
Y_{t}=\beta \log g(L_{t},K_{t},\gamma ,S)+ C +U_{t}\text{,} \qquad t\in[T]\text{,}  \label{CES model for y}
\end{equation}
where $V = (S,C)$ is a pair of unit-specific unobservable variables, $C \in \mathbb{R}$, and 
\begin{equation*}
g(l,k,\gamma ,s)\equiv \left( \gamma l^{s}+(1-\gamma )k^{s}\right)
^{1/s},\quad l,k>0,\quad s\in [-\infty ,1], \quad \gamma\in\left(0,1 \right)\text{,}
\end{equation*}
is the CES function with substitution parameter $s$.\footnote{So $\mathcal{R}_S$ is the subset of the extended real line on which all elements are no greater than one.}

Variables $X_{t}\equiv \left( L_{t},K_{t}\right) $ denote labor and capital
inputs at time $t$, $U_{t}$ denotes $t$-varying unobservable variables, and $\theta =(\beta,\gamma)$ are
common parameters with $\beta > 0$.

Thus \eqref{panel specification} holds for $f = f_\theta$ with
\begin{equation}
f_{\theta}(Y^{t-1},X_t,V,U_t) = \beta \log g(L_{t},K_{t},\gamma ,S)+ C +U_{t}\text{.} 	
\end{equation}

This specification is considered briefly in \cite{Bonhomme:12} as an
example of nonlinear regression, with a unit-specific effect ($S$)
entering nonlinearly.\footnote{Equation \eqref{CES model for y} appears under Example 3 as equation (6) in \cite{Bonhomme:12}, in the notation of that paper using the symbol $\sigma $
where we use $S$. The discussion here uses a simplified version in which
there is no low-skilled labor input.}
In that paper $U\equiv (U_{1},\dots ,U_{T})$ is restricted
Gaussian independent of $(X,V)$, where $V$ is denoted by $\alpha$. It is stated on page 1344,
\textquotedblleft Due to the nonlinearity, it does not seem possible to
difference out $\alpha $ in a straightforward way.\textquotedblright 

The projection approach can be applied here, as is now illustrated. The Gaussian restriction proposed in \cite{Bonhomme:12} is not required.


The set of values of $(Y,X,U)$ delivered by the model for some value of $V$
is obtained as follows.

The function $g(l,k,\gamma ,s)$ is a generalized mean, monotone
increasing in $s\in [-\infty ,1]$, bounded with
\begin{equation*}
g(l,k,\gamma ,-\infty) \equiv \min \{l,k\},\qquad
g(l,k,\gamma ,0) \equiv  l^{^\gamma}k^{1-\gamma}\text{.}
\end{equation*}
With $x_t \equiv (x_{t1},x_{t2}) = (l_t,k_t)$ and $v = (s,c)$ define for each $t$:
\begin{equation}\label{CES mt def}
m_t(y,x,\theta ,v )\equiv y_t - \beta \log g(x_{t1},x_{t2},\gamma,s) - c\text{,}
\end{equation}
which is monotone decreasing in each component of $v$.

The set of feasible combinations of $(y,x,u)$ obtained by projection across $v$ is 
\begin{equation}
\mathcal{R}_{YXU}(\theta) = \bigl\{(y,x,u):\exists v \in \left[-\infty,1\right] \times \mathbb{R} \text{ s.t. } u_t = m_t(y,x,\theta,v) \text{ all } t \in [T] \bigr\}\text{.}  \label{CES projection}
\end{equation}
The $U$-level set of this projection for any realizations $y,x$ is 
\begin{equation}\label{U level sets}
\mathcal{U}\left(y,x,\theta\right) = \bigl\{u:\exists v \in \left[-\infty,1\right] \times \mathbb{R} \text{ s.t. } u_t = m_t(y,x,\theta,v) \text{ all } t \in [T] \bigr\}
\text{.}
\end{equation}

If $V$ were observable then realization of $V=v$ would reveal the realization of $U$ as the singleton value with components $m_t(y,x,\theta,v)$, for each $t \in [T]$. The set $\mathcal{U}\left(y,x,\theta\right)$ is a
manifold on $\mathbb{R}^T$ comprising the set of such vectors compatible
with \emph{some} value of unobservable $V$.


For ease of illustration consider the case in which $T=3$. Manifolds $\mathcal{U}(y,x;\theta)$ are illustrated for an example in which $x=(l_t,k_t)$, $t=1,2,3$ are set according to:
\begin{equation*}
l_1 = 0.5,\quad l_2=1,\quad l_3=2,\quad k_1 = 2.2,\quad k_2 = 1.2,\quad k_3=0.7\text{,}	
\end{equation*}
and parameters are set at $\gamma=0.6$ and $\beta=1$. The left panel of Figure \ref{Figure:CES Four Level Sets} depicts eight manifolds $\mathcal{U}(y,x;\theta)$, one each for values of $y$ with $y^{\Delta}_{21}$ and $y^{\Delta}_{31}$ taking values shown in green in the right panel.\footnote{Because the individual effect $C$ enters additively, each manifold $\mathcal{U}\left(y,x,\theta\right)$ is represented in the space of differences $(u^{\Delta}_{21},u^{\Delta}_{31})$ and is fully determined by the values of $y^{\Delta}_{21}$ and $y^{\Delta}_{31}$.}  Figure \ref{Figure:CES Forty Level Sets} shows the set $A(\theta)$ in blue in the left hand panel comprising the union of the sets $\mathcal{U}(y,x;\theta )$ obtained as $y$ takes values such that $y^{\Delta}_{21}$ and $y^{\Delta}_{31}$ belong to the region $B$ shaded in green in the
right hand panel. Because $(Y^{\Delta}_{21},Y^{\Delta}_{31})\in B$ implies that $(U^{\Delta}_{21},U^{\Delta}_{31})\in A(\theta)$, there is for all $x \in \mathcal{R}_X$ the inequality
\begin{equation*} \mathbb{P}[(Y^{\Delta}_{21},Y^{\Delta}_{31})\in B|X=x] \leq \mathbb{P}[(U^{\Delta}_{21},U^{\Delta}_{31})\in A(\theta)|X=x]\text{,} \end{equation*} which restricts the values of $\theta $ compatible with $F_{YX}$ because of the dependence of $A(\theta)$ on $\theta$. Implications of this sort also arise by use of Artstein's inequality.
If $U$ and $X$ are stochastically independent the inequality above becomes
\begin{equation*} \sup_{x \in \mathcal{R}_X}\mathbb{P}[(Y^{\Delta}_{21},Y^{\Delta}_{31})\in B|X=x] \leq \mathbb{P}[(U^{\Delta}_{21},U^{\Delta}_{31})\in A(\theta)]\text{.} \end{equation*}



Section \ref{Sec: Identified sets} characterizes identified sets for the
common parameters obtained in this model under moment restrictions on the product of elements of $U$ and functions or components of $X$. First, the following section presents a second example of a panel model with continuous outcomes.

\begin{figure}[htbp]
    \centering
       \begin{subfigure}{\textwidth}
        \centering
        \includegraphics[trim = {40 0 10 0},scale=1]{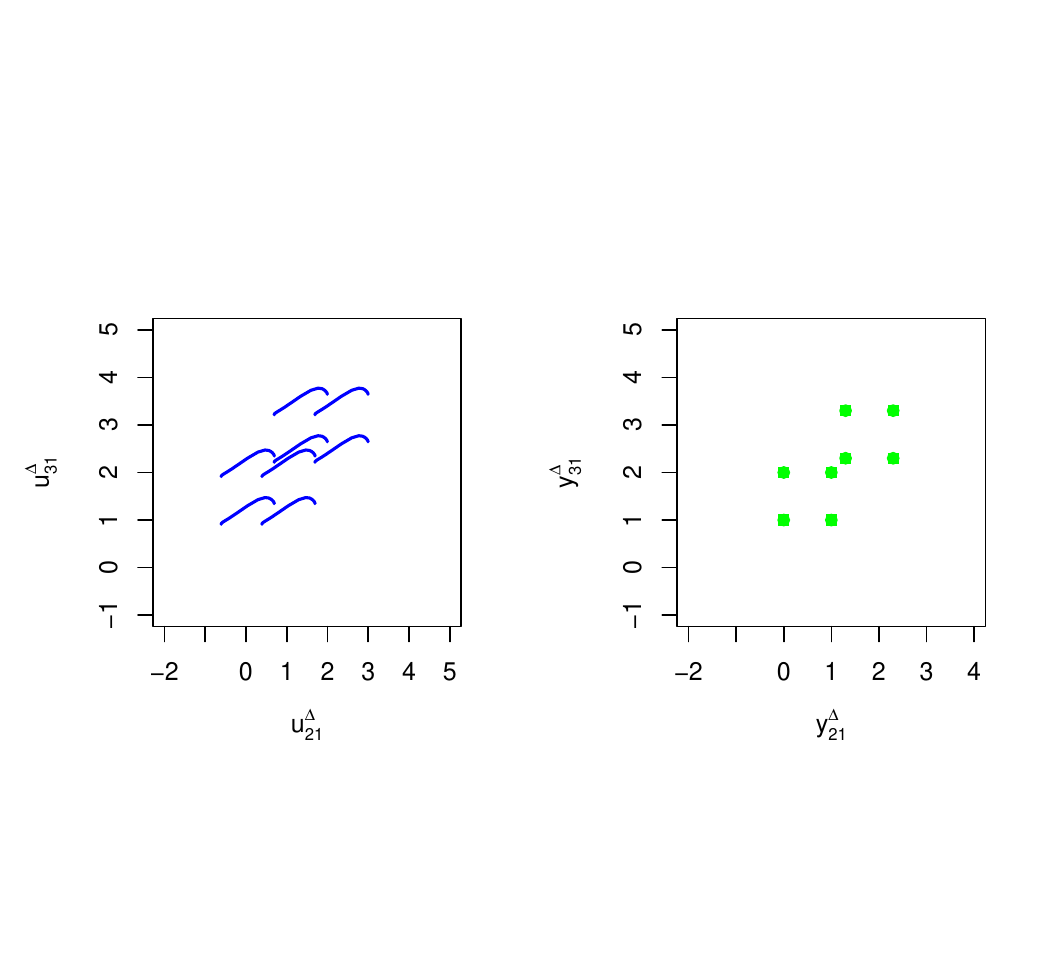} 
        \caption{The panel above on the left illustrates eight manifolds $\mathcal{U}(y,x;\theta)$ for the CES production function example with $x$ and $\theta$ as described in the text. The panel above on the right shows the corresponding values of $y^{\Delta}_{21}$ and $y^{\Delta}_{31}$.} \label{Figure:CES Four Level Sets}
    \end{subfigure}
    \begin{subfigure}{\textwidth}
    \centering
        \includegraphics[trim = {40 0 10 0}, angle = 0, scale = 0.9]{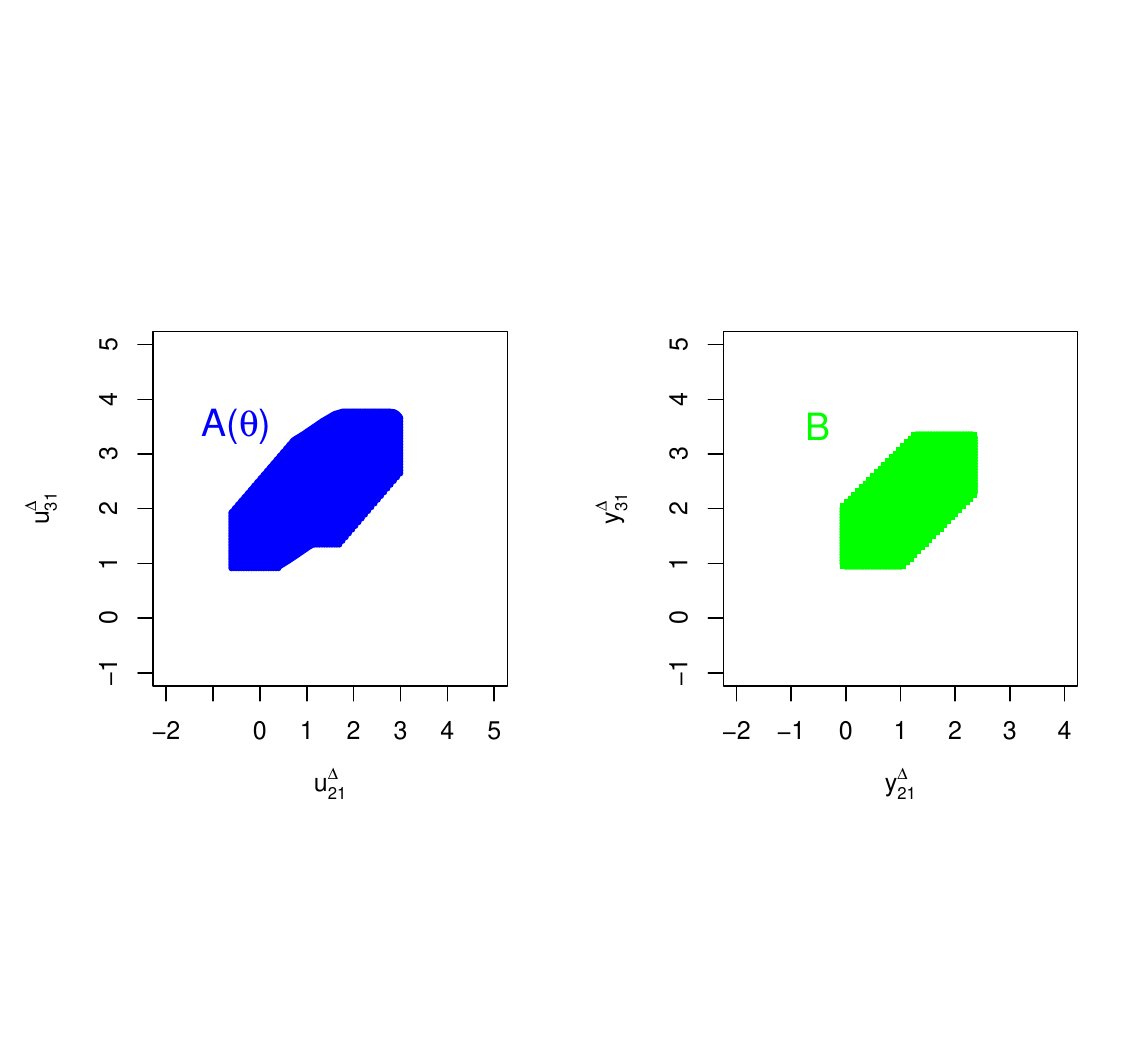} 
        \caption{The panel above on the left illustrates the union of the sets $\mathcal{U}(y,x;\theta)$ as $y$ varies across all values of $y$ such that $y^{\Delta}_{21}$ and $y^{\Delta}_{31}$ take values in the set shown on the right for the CES production function example with $x$ and $\theta$ as described in the text.}\label{Figure:CES Forty Level Sets}
    \end{subfigure}
    \caption{CES production function manifolds $\mathcal{U}(y,x;\theta)$.}
 \end{figure}

\subsection{\label{Sec: Interval Censored Panels}Linear Panels with Censored
Outcomes and Covariates}

This section provides results for linear panel models when data are interval
censored.\footnote{Analysis of panel models with censored outcomes has been studied in e.g. Honor\'{e} (1992, 1993),\nocite{Honore:92} \nocite{Honore:93} \cite{Hu:2002},  \cite{Khan/Ponomareva/Tamer:16}, and \cite{Abrevaya/Muris:20}.}

The model specifies 
\begin{equation}
Y_{t}^{\ast }=X_{t}^{\ast }\theta +C+U_{t},\qquad t\in \lbrack T],
\label{linear outcome model censored}
\end{equation}
where $\theta$ is a $k_x \times 1$ vector of common parameters, each $X_{t}^{\ast }$ is a $1\times k_{x}$ vector, and $V=C$. The unobserved variables are 
$Y^{\ast }$, $X^{\ast }$, $C$, and $U$. The observed variables are 
\begin{equation}
X\equiv \{(X_{t}^{L},X_{t}^{H}):t\in \lbrack T]\},\quad Y\equiv \{\left(
Y_{t}^{L},Y_{t}^{H}\right) :t\in \lbrack T]\}\text{,}
\label{censoring observable vars}
\end{equation}
where 
\begin{equation}
\forall t\in \lbrack T]:\qquad Y_{t}^{L}\leq Y_{t}^{\ast }\leq
Y_{t}^{H},\qquad X_{t}^{L}\leq X_{t}^{\ast }\leq X_{t}^{H}\text{,}
\label{censoring rules}
\end{equation}
and inequalities hold element-wise. There may be interval censoring of
components of one or both of the outcome $Y^{\ast }$ and covariates $X^{\ast
}$. Components that are not censored have $Y_{t}^{L}=Y_{t}^{H}$ for outcomes
and $X_{tj}^{L}=X_{tj}^{H}$ for any uncensored components $X_{tj}^{\ast }$
of $X_{t}^{\ast }$. Missing data can be captured by having both lower and upper
limits correspond to the end points of the support of the corresponding variables.

Define
\begin{align*}
B_{t}^{L}(Y,X;\theta )& \equiv Y_{t}^{L}-\sum_{j=1}^{k_{x}}\max
\{X_{tj}^{H}\theta _{j},X_{tj}^{L}\theta _{j}\}, \\
B_{t}^{H}(Y,X;\theta )& \equiv Y_{t}^{H}-\sum_{j=1}^{k_{x}}\min
\{X_{tj}^{H}\theta _{j},X_{tj}^{L}\theta _{j}\}\text{,}
\end{align*}
There is, for all $s$ and $t$ in $[T]$: 
\begin{align*}
B_{t}^{L}(Y,X;\theta )&\leq C+U_{t} \quad\leq B_{t}^{H}(Y,X;\theta )\text{,} \\
-B_{s}^{H}(Y,X;\theta )&\leq -C-U_{s}\leq -B_{s}^{L}(Y,X;\theta )\text{.}
\end{align*}
Adding the two inequalities yields the projection of the model-admitted set
of values of $(Y,X,C,U)$ onto the space of $(Y,X,U)$ as follows.
\begin{equation}
\mathcal{R}_{YXU}(\theta)=\bigl\{(y,x,u):\forall s,t,\quad
B_{t}^{L}(y,x;\theta )-B_{s}^{H}(y,x;\theta )\leq u_{t}-u_{s}\leq
B_{t}^{H}(y,x;\theta )-B_{s}^{L}(y,x;\theta )\bigr\}
\label{censored model projection}
\end{equation}
Recall that, absent censoring, the linear model by contrast delivers the
projected set 
\begin{equation}\label{censored linear projection}
\mathcal{R}_{YXU}(\theta)=\{(y,x,u):\forall s,t,\quad
u_{t}-u_{s}=y_{t}-y_{s}-(x_{t}-x_{s})\theta \}
\end{equation}
in which there are equalities, whereas with censoring there are inequalities
as in the CES production function case.

With censoring the level set of $U$-values that deliver $Y=y$ when $X=x$ is
simply the slice through the projection defined in (\ref{censored model
projection}) obtained fixing $(y,x)$ accordingly: 
\begin{equation*}
\mathcal{U}(y,x;\theta )=\{u:\forall s,t,\quad B_{t}^{L}(y,x;\theta
)-B_{s}^{H}(y,x;\theta )\leq u_{t}-u_{s}\leq B_{t}^{H}(y,x;\theta
)-B_{s}^{L}(y,x;\theta )\}.
\end{equation*}

This characterization of $\mathcal{U}(Y,X;\theta )$ provides a starting point
for identification analysis using various restrictions on the joint
distribution of $(X,U)$. When covariates are censored,
consideration of context may lead one to prefer restrictions on
the joint distribution of $(X^{\ast },U)$, as considered in the following subsection. 

\subsection{\label{Sec: Identified sets}Identified sets}

Identification analysis for common parameters is now presented for the two
models just considered under moment restrictions on within-unit-varying unobservables and covariates. A characterization for the CES model parameters under a conditional mean restriction is then provided. The methods employed can be applied to more general forms of moment conditions than the ones considered here.

\subsubsection{Moment Restrictions}\label{Sec: Moment Restrictions}

Consider the following restriction. 

\noindent\textbf{Restriction M}. For all $t \in [T]$, $E\left[Z_t U_t \right]=\mathbf{0}_{J_t}$, where each $Z_t = Z_t(X)$ is a vector-valued function of $X$ of dimension $J_t$. 

Weak exogeneity and strict exogeneity of components of $X$ can be accommodated by appropriate definition of each $Z_t$ as will be shown below. Flexibly defining $Z_t$ can further be used to specify that some covariates are strictly exogenous while others are only weakly exogenous.

Since Restriction M requires that $E\left[Z_t U_t \right]=\mathbf{0}_{J_t}$ for all $t$, it is useful to
define the level set of possible values of $\left(\left(Z_1 U_1\right),...,\left(Z_T U_T\right)\right)$ obtained from the projection $\mathcal{R}_{YXU}(\theta)$ defined in \eqref{Projection Definition}, making use of the $U$-level sets defined in \eqref{Uiset definition}:  
\begin{equation}  \label{M level set} \mathcal{Q}(y,x;f) \equiv \left\{ \bigl(\left(z_1  u_1 \right),...,\left(z_T  u_T \right)\bigr):u \in \mathcal{U}(y,x;f)\right\} \text{.}
\end{equation}
Thus $\mathcal{Q}(y,x;f)$ is a
set of vectors, each element of which is a $J \equiv J_1 + \dots +J_T$ dimensional vector whose
entries correspond to feasible values of components of $Z_t U_t$ under structural function $f$ across all $t \in[T]$, when $Y=y$ and $X=x$.

Replacing fixed arguments $(y,x)$ with $(Y,X)$ in \eqref{M level set} yields $\mathcal{Q}(Y,X;f)$, a random set whose distribution is determined by that of $(Y,X)$. Under Restriction M, the panel model specification \eqref{panel specification} with structural function $f$ can produce the distribution of $(Y,X)$ if and only if there is a \textit{measurable selection} of $\mathcal{Q}(Y,X;f)$, defined below, whose expected value is the zero vector $\mathbf{0}_J$.
\begin{definition}
	Let $\mathcal{Q}$ be a random closed set on $\left( \Omega ,\mathsf{L},\mathbb{P}\right)$ whose realizations are subsets of $\mathbb{R}^J$. A random vector $Q$  measurable on $\left( \Omega ,\mathsf{L},\mathbb{P}\right)$ is a \textbf{measurable selection} of $\mathcal{Q}$ if $Q(\omega)\in \mathcal{Q}(\omega) $ for almost all $\omega \in \Omega$.
\end{definition}

Let $\mathbf{L}^{1}(\mathcal{Q})$ denote the set of all integrable measurable selections of a random set $\mathcal{Q}$. The Aumann integral and Aumann expectation of $\mathcal{Q}$ are defined, respectively, as
\begin{equation*}
\mathbb{E}_I[\mathcal{Q}] \equiv \left\{E\left[Q \right]: Q \in \mathbf{L}^{1}(\mathcal{Q})\right\}\text{,}\qquad \mathbb{E}[\mathcal{Q}] \equiv \mathsf{cl}\left(\mathbb{E}_I[\mathcal{Q}]\right)\text{.}	
\end{equation*}
Define the sets
\begin{equation}\label{identified set and moment closure}
\mathcal{F}_I \equiv \left\{f \in \mathcal{F}: \mathbf{0}_J \in \mathbb{E}_I\left[\mathcal{Q}(Y,X;f)\right] \right\},\quad 	\overline{\mathcal{F}_I} \equiv \left\{f \in \mathcal{F}: \mathbf{0}_J \in \mathbb{E}\left[\mathcal{Q}(Y,X;f)\right] \right\}\text{.}
\end{equation}
Under the restrictions of Proposition \ref{Theorem: CES Model Under M}, the set $\mathcal{F}_I$ is the identified set for structural function $f$. The set $\overline{\mathcal{F}_I}$ is a superset of $\mathcal{F}_I$, and therefore provides bounds on $f$.

The set $\overline{\mathcal{F}_I}$ is the \textit{moment-closure} of the identified set in the terminology of \cite{Li:26}, and it is useful for estimation and inference for several reasons. First, the set $\overline{\mathcal{F}_I}$ can coincide with $ \mathcal{F}_I$ and hence be sharp, for example when the Aumann integral is closed, in which case the Aumann integral and Aumann expectation coincide. Such conditions hold under a mild restriction for the CES example of Section \ref{Sec: CES model}, as shown in Proposition \ref{Prop CES sharpness} in the Appendix. Second, characterization of $\overline{\mathcal{F}_I}$ by way of the Aumann expectation is equivalent to a support function characterization which has the form of moment inequalities that only involve expectations of random variables rather than random sets.  Third, there are conditions under which the identified set and its moment closure are statistically indistinguishable even if they do not coincide, as shown in \cite{Li:26}.

The usefulness of the support function characterization for the moment closure follows from two consequences of the restrictions of Proposition \ref{Theorem: CES Model Under M} below. First, there is the equivalence
\begin{equation}\label{support function equivalence}
	\mathbf{0}_J \in \mathbb{E}\left[\mathcal{Q}(Y,X;f)\right] \iff \min_{r\in\mathcal{B}^J}h\left(\mathbb{E}\left[\mathcal{Q}(Y,X;f)\right],r\right) \geq 0\text{,}
\end{equation}
where $\mathcal{B}^J \equiv \left\{r\in \mathbb{R}^J: \lVert r \rVert = 1 \right\}$ is the boundary of the unit ball in $\mathbb{R}^J$ and
\begin{equation*}
	h(\mathcal{Q},r) \equiv \sup_{q \in \mathcal{Q}} r \cdot q\text{,}\qquad  
\end{equation*}
denotes the support function of any set $\mathcal{Q}\subseteq \mathbb{R}^J$ evaluated at $r\in\mathbb{R}^J$.\footnote{This follows from Theorem 2.1.26 of \cite{Molchanov:17} because the underlying probability space is nonatomic under Restriction PM and $\mathcal{Q}(Y,X;\theta)$ is integrable under Restriction M, so $\mathbb{E}\left[\mathcal{Q}(Y,X;\theta)\right]$ is convex.} Second, the order of the expectation and support function can be swapped yielding
\begin{equation*}
	h\left(\mathbb{E}\left[\mathcal{Q}(Y,X;f)\right],r\right) = E\left[h\left(\mathcal{Q}(Y,X;f),r\right) \right]\text{,}
\end{equation*}
where on the right there is the expectation of a random variable.\footnote{This follows from Theorem 2.1.35 of \cite{Molchanov:17} because the underlying probability space is nonatomic.} Putting all this together, there is the following Proposition regarding $\mathcal{F}_I$ and $\overline{\mathcal{F}_I}$ defined in \eqref{identified set and moment closure}.
\begin{proposition}\label{Theorem: CES Model Under M}
	Suppose that Restrictions PM and M hold and that $\mathcal{Q}(Y,X;f)$ is closed almost surely. Then $\mathcal{F}_I$  is the identified set for the structural function $f$. Moreover, the moment closure $\overline{\mathcal{F}_I}$ of $\mathcal{F}_I$ comprises bounds on $f$ and admits the support function representation
	\begin{equation}\label{moment closure set}
		\overline{\mathcal{F}_I} = \left\{f \in \mathcal{F}:\min_{r\in\mathcal{B}^J}E\left[h(\mathcal{Q}(Y,X;f),r) \right] \geq 0\right\}\text{.}
	\end{equation}
	If $\mathbb{E}_I\left[\mathcal{Q}(Y,X;f)\right]$ is closed then $\overline{\mathcal{F}_I} = \mathcal{F}_I$. 
\end{proposition}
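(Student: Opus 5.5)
The proof has three parts: show that $\mathcal{F}_I$ is the identified set, derive the support function representation of $\overline{\mathcal{F}_I}$, and then settle the closedness claim.

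\emph{Step 1: $\mathcal{F}_I$ is the identified set.} I start from Proposition \ref{theorem: profile V}, in the form \eqref{Ustar selectionability}. Here the model $\mathcal{M}$ consists of all pairs $(f,G_{U|X})$ with $f\in\mathcal{F}$ and $G_{U|X}$ satisfying Restriction M. So $f$ lies in the identified set if and only if there is a random vector $\tilde U$, on the nonatomic space, such that $\tilde U\in\mathcal{U}(Y,X;f)$ almost surely and $E[Z_t(X)\tilde U_t]=\mathbf{0}_{J_t}$ for all $t$.

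For necessity, suppose such a $\tilde U$ exists and set $Q\equiv\bigl((Z_1\tilde U_1),\dots,(Z_T\tilde U_T)\bigr)$. By the definition \eqref{M level set}, $Q$ is a measurable selection of $\mathcal{Q}(Y,X;f)$. It is integrable, because Restriction M presumes the moments exist, and $E[Q]=\mathbf{0}_J$. Hence $\mathbf{0}_J\in\mathbb{E}_I[\mathcal{Q}(Y,X;f)]$.

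For sufficiency, take an integrable selection $Q$ with $E[Q]=\mathbf{0}_J$. I need a $\tilde U$ that is measurable in $(\omega)$ with $\tilde U\in\mathcal{U}(Y,X;f)$ and $(Z_t\tilde U_t)_t=Q$. To get it, I apply a measurable selection theorem (Kuratowski--Ryll-Nardzewski or Aumann's theorem) to the random set
\[
\{u\in\mathcal{U}(Y,X;f):(z_tu_t)_t=Q\}.
\]
This set is nonempty almost surely because $Q\in\mathcal{Q}(Y,X;f)$, and it is graph-measurable because the map $u\mapsto(z_tu_t)$ is continuous. The resulting $\tilde U$ satisfies Restriction M, so $G_{\tilde U|X}$ is admitted by the model. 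Together with Proposition \ref{theorem: profile V}, this shows $f$ is in the identified set.

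\emph{Step 2: the support function representation.} First, $\mathcal{F}_I\subseteq\overline{\mathcal{F}_I}$ because $\mathbb{E}_I\subseteq\mathsf{cl}(\mathbb{E}_I)$; this is the bounds statement. Closedness of $\mathcal{Q}(Y,X;f)$ almost surely makes it a random closed set, and Restriction M gives a nonempty integrable selection. On a nonatomic space, Theorem 2.1.26 of \cite{Molchanov:17} then gives that $\mathbb{E}[\mathcal{Q}]$ is closed and convex.

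A closed convex set $K$ contains $\mathbf{0}$ if and only if $h(K,r)\ge 0$ for all unit $r$. This is a separating hyperplane argument, and it holds equally when $K$ is unbounded and $h$ takes the value $+\infty$. This yields \eqref{support function equivalence}. Theorem 2.1.35 of \cite{Molchanov:17} then gives $h(\mathbb{E}[\mathcal{Q}],r)=E[h(\mathcal{Q},r)]$, which produces \eqref{moment closure set}. I would write $\inf$ in place of $\min$, or note that lower semicontinuity in $r$ on the compact sphere makes the infimum attained.

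\emph{Step 3: the closedness claim.} If $\mathbb{E}_I[\mathcal{Q}]$ is closed, then $\mathbb{E}=\mathbb{E}_I$, so the two sets coincide.

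I expect the main obstacle to be the sufficiency direction of Step 1. Recovering a measurable $\tilde U$ from the selection $Q$ requires the graph-measurability argument, and it needs care when some $z_t$ vanish, since then $u_t$ is not determined by $Q$. In that case any measurable choice from the nonempty fibre suffices, and I would handle it with the measurable selection theorem as outlined above.
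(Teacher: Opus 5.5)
Your proposal is correct and follows the paper's proof essentially step for step. You reduce to selections via Proposition~\ref{theorem: profile V} (implicit in the paper) and identify mean-zero integrable selections of $\mathcal{Q}(Y,X;f)$ with selections $\tilde U$ of $\mathcal{U}(Y,X;f)$ satisfying Restriction M. You then obtain the moment closure from convexity of the Aumann expectation (Theorem 2.1.26 of \cite{Molchanov:17}), the separating-hyperplane criterion, and the interchange $h(\mathbb{E}[\mathcal{Q}],r)=E[h(\mathcal{Q},r)]$ (Theorem 2.1.35), exactly as the paper does.

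There are two refinements over the paper. First, you recover $\tilde U$ from a selection $Q$ with an explicit measurable-selection argument on the complete space, where the paper simply asserts this ``by definition of $\mathcal{Q}$''. That argument also needs measurability of the graph of $\mathcal{U}(Y,X;f)$ itself, which holds because it is a projection of a Borel set and hence analytic, not just continuity of $u\mapsto(z_tu_t)$. Second, you observe that the minimum over $r$ is attained by lower semicontinuity.
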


Proposition \ref{Theorem: CES Model Under M} states that $\mathcal{F}_I$ is the identified set for $f$, and that its moment closure $\overline{\mathcal{F}_I}$ is characterized by the moment inequalities $E\left[h(\mathcal{Q}(Y,X;f),r) \right] \geq 0$, for all $r\in \mathbb{R}^J$. The reasoning is similar to that of Theorem 4.1 of \cite{Beresteanu/Molchanov/Molinari:09} which established sharp bounds on the best linear predictor with censored outcomes and covariates. The difference is that the analysis here applies to a nonlinear panel model with individual effects, rather than a model for cross section data, so the random set $\mathcal{Q}(Y,X;f)$ is constructed by taking the product of components of $Z$ as specified by Restriction M with the $U$-level set obtained from $ \mathcal{R}_{YXU}(f)$ after removing $V$ from the model by projection.\footnote{The projection step renders an additional integrably boundedness condition used in Theorem 4.1 of \cite{Beresteanu/Molchanov/Molinari:09} inapplicable without further restrictions in the present setting, necessitating the distinction between the identified set and its moment closure.} 

A connection to the support function approach of \cite{Beresteanu/Molchanov/Molinari:09} is also made in \cite{Lee:26} which, in contrast to the projection approach, uses moments that restrict the joint distribution of individual effects with other heterogeneity terms. That paper uses duality theory for infinite dimensional programs to characterize the identified set of features of the distribution of random coefficients in linear panel models. It also shows that if the target parameter is a common parameter, the characterization is equivalent to that obtained by the support function approach used in \cite{Beresteanu/Molchanov/Molinari:09} Theorem 4.1.  Thus, subject to regularity conditions, duality theory for infinite dimensional programs can also be used with the moment conditions of this paper for identification analysis. This relationship also highlights the possibility of using the formulation of \cite{Schennach:14} for developing estimation and inference approaches as a potential alternative to the moment inequality approach, an avenue which is left to future research. 

\subsubsection*{Identification in the CES Model}
In the CES model the class $\mathcal{F}$ is parameterized by $\theta = (\beta,\gamma)$. Focus is given to the moment closure of the identified set for $\theta$, denoted $\Theta^{\ast}$, and the resulting moment inequalities. The notation replaces $f$ with $\theta$ accordingly. In the CES model
\begin{equation}  \label{CES moment set}
\mathcal{Q}(y,x;\theta) \equiv \left\{ \bigl(z_1  m_1\left(y,x;\theta,v) \right),...,\left(z_T  m_T(y,x;\theta,v) \right)\bigr):v\in \left[-\infty,1\right] \times \mathbb{R} \right\} \text{,}
\end{equation}
where $m_t(y,x,\theta,v)$ defined in \eqref{CES mt def} denotes for each $t$ the unique value of $u_t$ given fixed values of $(y,x,\theta,v)$.

The characterization of Proposition \ref{Theorem: CES Model Under M} is specialized under Restriction M with two different specifications for $Z$ as follows.

\noindent\textbf{Restriction MS}. Restriction M holds with $Z_t = (1,X_1,...,X_T)$ for all $t$.\\
\noindent\textbf{Restriction MW}. Restriction M holds with $Z_t = (1,X_1,...,X_t)$ for all $t$.

Consider first Restriction MS which requires that $E\left[X_{hk}U_t \right] = 0$ for all $h,t \in [T]$ and all $k=1,2$ where $X_{h1} = L_h$ and $X_{h2} = K_h$. This is a strict exogeneity restriction comprising $J=T(2T+1)$ moment restrictions. The expectation of the support function of $\mathcal{Q}(Y,X;\theta)$ given by \eqref{CES moment set} under the CES specification simplifies as
\begin{equation*}
	E\left[h(\mathcal{Q}(Y,X;\theta),r) \right] = E\left[\sup_{v} \sum_{t=1}^T w_t(r,X) m_t(Y,X,v,\theta)\right]\text{,}
\end{equation*}
where for each $t\in[T]$,
\begin{equation*}
	w_t(r,x)\equiv r_{t} + \sum_{h=1}^T \sum_{k=1}^2 r_{thk} x_{hk}\text{,}
\end{equation*}
and $r$ is a list of $J = T(2T+1)$ numbers $r_t,r_{thk}$ for all $t,h,k$. 

Using the definition of $m_t$ in \eqref{CES mt def} and simplifying, it follows from Proposition \ref{Theorem: CES Model Under M} that under Restriction MS the set $\Theta^{\ast}$ is the set of $\theta = (\beta,\gamma)$ that satisfy
\begin{equation}\label{CES MS inequalities 2}
\min_{r: \lVert r \rVert = 1 }	 E\left[\sup_{s,c} \sum_{t=1}^Tw_t(r,X)\left(Y_t - \beta \log g(X_{t1},X_{t2},\gamma,s) - c \right)\right] \geq 0\text{.}
\end{equation}

Now suppose instead that only weak exogeneity is asserted, such that Restriction MW is imposed. Working through the same steps under this weaker restriction, Proposition \ref{Theorem: CES Model Under M} delivers $\Theta^{\ast}$ as those parameter vectors satisfying the same inequalities \eqref{CES MS inequalities 2}, but now with $r$ additionally restricted to satisfy $r_{thk}=0$ for all $t<h$. Minimization over this restricted set imposes the zero moment restrictions $E\left[ X_{hk} U_t \right]=0$ only for $t \geq h$.

Section \ref{Sec: Numerical Illustrations} demonstrates that $\Theta^{\ast}$ can produce informative sets by way of numerical illustrations under both weak and strict exogeneity restrictions. In that section further simplification of the inequalities \eqref{CES MS inequalities 2} is provided for that purpose.

Moment restrictions on other functions of covariates and components of $U$ may similarly be imposed through Restriction M beyond the two specific cases of Restriction MS and Restriction MW considered here. Moment conditions incorporating instrumental variables can be used by specifying components of $Z_t$ as functions of components of $X$ with respect to which the structural function is restricted to be invariant, for example by way of exclusion restrictions.

\subsubsection*{Identification in the Censored Linear Panel Model}

Moment restrictions can also be used in the censored linear panel model described in Section \ref{Sec: Interval Censored Panels}. Focus is again given to the moment closure of the identified set for $\theta$, denoted $\Theta^{\ast}$, and the resulting moment inequalities. In this model it may be desirable to invoke moment restrictions involving functions of censored covariate values $X^{\ast}$ rather than the observed endpoints of intervals on which $X^{\ast}$ is realized, which allows censoring to be endogenous.\footnote{If moment restrictions are made solely with respect to $X$, analysis following the steps of the previous section incorporating the linear panel specification applies directly, so is not repeated here.} Thus the following restriction is considered.\\
\noindent\textbf{Restriction M$^{\ast}$}: For all $t \in [T]$, $E\left[Z_t U_t \right]=\mathbf{0}_{J_t}$, where each $Z_t = Z_t(X^{\ast})$ is a vector-valued function of $X^{\ast}$ of dimension $J_t$. 

The level set $\mathcal{Q}(y,x;\theta)$ of possible values of $\bigl(\left(Z_1 U_1\right),...,\left(Z_T U_T\right)\bigr)$ obtained from the censored linear panel projection $\mathcal{R}_{YXU}(\theta)$ given in \eqref{censored linear projection} under Restriction $M^{\ast}$ is
\begin{equation*}
	 \biggl\{\biggl(Z_1(x^{\ast})(y_1^{\ast}-x_1^{\ast}\theta - c),...,Z_T(x^{\ast})(y_T^{\ast}-x_T^{\ast}\theta -c )\biggr): c \in \mathcal{R}_C, (y^{\ast},x^{\ast})\in \mathcal{D}(y,x)\biggr\}\text{,}
\end{equation*}
where $\mathcal{D}(y,x)$ denotes the set of possible values of the censored variables given $(y,x)$:
\begin{equation*}
	\mathcal{D}(y,x) \equiv \bigr\{(y^{\ast},x^{\ast}):  y_{t}^{L}\leq y_{t}^{\ast }\leq
y_{t}^{H},\quad x_{t}^{L}\leq x_{t}^{\ast }\leq x_{t}^{H}, \text{ all } t \in [T]\bigl\}\text{.}
\end{equation*}

Following the same reasoning used when considering Restriction M and the CES model the moment closure of the identified set, $\Theta^{\ast}$, comprises $\theta$ such that $\mathbf{0}_J \in \mathbb{E}\left[\mathcal{Q}(Y,X;\theta)\right]$. Use of the support function yields an equivalent characterization \textit{via} moment inequalities:
\begin{equation*}
	 \min_{r: \lVert r \rVert = 1 }	 E\left[\sup \left\{\sum_{t=1}^T \biggl( \overrightarrow{r_t} \cdot  \bigl( Z_t(x^{\ast})(y_t^{\ast}-x_t^{\ast}\theta - c)\bigr)\biggr): c \in \mathcal{R}_C,\text{ } (y^{\ast},x^{\ast})\in \mathcal{D}(y,x) \right\} \right] \geq 0\text{,}
\end{equation*}
where each $\overrightarrow{r_t}$ is a vector of length $J_t$ such that $r = (\overrightarrow{r_1},...,\overrightarrow{r_T})$.

As was the case for Restriction M, Restriction M$^{\ast}$ can accommodate different moment restrictions through specification of $Z_t$. For example, $Z_t = (1,X^{\ast}_1,...,X^{\ast}_T)$ and $Z_t= (1,X^{\ast}_1,...,X^{\ast}_t)$ for strict and weak exogeneity restrictions, respectively.

\subsubsection{Conditional Moment Restrictions}
Consider the following conditional moment restriction, which provides a strict exogeneity restriction stronger than that of Restriction MS.

\noindent\textbf{Restriction CMS}: For all $t \in [T]$, $E\left[U_t|X \right] = 0$.

Restriction CMS implies that $\mathbf{0}_T \in \mathbb{E}\left[\mathcal{U}(Y,X;\theta) |X\right]$ almost surely, i.e. that the zero vector is an element of the conditional Aumann Expectation of the $U$-level set, the form of which for the CES model is given in \eqref{U level sets}.

Using the support function approach in the CES model this is equivalently that
\begin{equation*}
\min_{\lVert r \rVert =1}E\left[\sup_{s \in [-\infty,1] }\sup_{c \in \mathbb{R}} \sum_{t=1}^Tr_t\bigl(Y_t - \beta \log g(X_{t1},X_{t2},\gamma,s)-c \bigr)|X=x \right] \geq 0, \text{ a.e. } x,
\end{equation*}
where $r=(r_1,...,r_T)$. This can further be expressed
\begin{equation*}
\min_{\lVert r \rVert =1} \left(\sum_{t=1}^T r_t \mu_t(x)  - \min_{s \in [-\infty,1]} \sum_{t=1}^T r_t \beta \log g(x_{t1},x_{t2},\gamma,s) - \inf_{c\in \mathbb{R}} c \sum_{t=1}^T r_t \right) \geq 0, \text{ a.e. } x,
\end{equation*}
where $\mu_t(x) \equiv E\left[Y_t|X \right]$. From this it follows that for any $\theta$ only values of $r_1,...,r_T$ that sum to zero can provide the minimum over $r$, and the inequalities simplify to

\begin{equation*}
\min_{ \{ r: \lVert r \rVert =1, r_1 + \cdots +r_T = 0 \}} \left(\sum_{t=1}^T r_t \mu_t(x)  - \min_{s \in [-\infty,1]} \sum_{t=1}^T r_t \beta \log g(x_{t1},x_{t2},\gamma,s)  \right) \geq 0, \text{ a.e. } x,
\end{equation*}

So, as in the cases considered with unconditional moment restrictions, the bounds are characterized by an infinite collection of moment inequalities, in this case infinitely many conditional moment inequalities. As noted in the introduction estimation and inference methods from the recent literature are available, as discussed for example in the survey \cite{Shi:25}.






\section{Models with discrete outcomes}\label{Sec: Discrete Outcomes}
This section gives examples of application of the projection approach in
static and dynamic panel models with \emph{discrete} outcomes. A dynamic
binary outcome model is studied in Section \ref{Section: Binary Outcome
Panels}; a dynamic ordered response model is studied in Section \ref{Section: Ordered Outcome Panels}.

It is shown how, in models in which ordered discrete outcomes encode the values of continuous latent variables, autoregressive dependence in the latent continuous variables can be accommodated. This approach can also be employed in other discrete response models, such as multinomial choice.

In dynamic models attention must be paid to initial values of outcomes if
they are not observed. Following the approach of this paper, when they are
not observed they are treated as unit-specific unobserved
variables and are removed by projection. This is illustrated in the models considered in both Sections \ref{Section: Binary Outcome
Panels} and \ref{Section: Ordered Outcome Panels}.

Section \ref{Section: Discrete Outcomes Identified Sets} provides characterizations of identified sets for the binary and ordered response models of Sections \ref{Section: Binary Outcome Panels} and \ref{Section: Ordered Outcome Panels}. Moment-based restrictions such as those considered in continuous outcome models in Section \ref{Sec: continuous outcomes} can be uninformative in discrete outcome models; see  \cite{Manski:88}. So here attention is turned to the identifying power of stochastic independence restrictions. Section \ref{Subsec: NP Distributions} considers further options for conducting identification analysis absent a parametric specification of the distribution of $t$-varying unobservables, such as quantile and exchangeability restrictions.

Section \ref{Sec: Related Lit Discrete Outcomes} provides discussion of the related literature on binary and ordered outcome panel models. In contrast to the approach here, nearly all such models in the literature restrict the joint distribution of unit-specific effects and within-unit-varying heterogeneity.\footnote{The analysis in \cite{aristodemou2021semiparametric} is the sole exception of which we are aware.} To our knowledge, no previous models allow for unobserved initial conditions or for dependence on lagged \textit{latent} variables.

\subsection{\label{Section: Binary Outcome Panels}A Dynamic Binary Outcome
Panel Model}

Consider the following binary outcome panel specification. 
\begin{equation}
Y_{t}=1\left[ X_{t}\beta +\gamma Y_{t-1}+C+U_{t}\geq 0\right] ,\qquad t\in
\lbrack T \rbrack\text{,}  \label{binary model}
\end{equation}
with parameter vector $\theta =(\gamma ,\beta ^{\prime })$, $\mathcal{R}_C = \mathbb{R}$, and $U$ continuously distributed with full support on $\mathbb{R}^T$ conditional on $X$.\footnote{This implies that Restriction RCS holds and there is no loss in using weak inequalities throughout in expressions for the sets $\mathcal{R}_{YXU}(\theta)$ and $\mathcal{U}(y,x;\theta)$.} Higher order lags are easily accommodated. 

In a dynamic model in which $\gamma $ may be nonzero, there is an initial
condition to be considered. If the value of $Y_{0}$ is observable, then the $t$-invariant individual unobservable is $V=C$, while if $Y_{0}$ is not
observable $V=(C,Y_{0})$. Whichever is the case, define $\mathcal{Y}_{0}$ to
be the set of possible values of the initial condition such that, if the
initial condition is observable, then $\mathcal{Y}_{0}=\{y_{0}\}$ and if it
is not then $\mathcal{Y}_{0}=\{0,1\}$.

Projecting away individual effects yields
\begin{equation*}
\mathcal{R}_{YXU}(\theta) = \left\{(y,x,u): \exists (y_0,c) \in \mathcal{Y}_0 \times \mathbb{R}:y_t = 1\left[x_t \beta + \gamma y_{t-1} + c +u_t \geq 0 \right] , \text{ } t\in[T]  \right\}	\text{.}
\end{equation*}
It is convenient to define sets of indices
\begin{equation*}
\mathcal{T}_{0}\equiv \left\{ t\in \lbrack T]:Y_{t}=0\right\} \text{,}\qquad 
\mathcal{T}_{1}\equiv \left\{ t\in \lbrack T]:Y_{t}=1\right\} \text{.}
\end{equation*}
Observing that
\begin{equation*}
\begin{array}{cccc}
\forall t\in \mathcal{T}_{0}\text{:}\quad \quad & X_{t}\beta +\gamma
Y_{t-1}+U_{t}\leq & -C &  \\ 
\forall t\in \mathcal{T}_{1}\text{:}\quad \quad &  & -C & \leq X_{t}\beta
+\gamma Y_{t-1}+U_{t}\text{,}
\end{array}
\end{equation*}
the projection can be expressed as
\begin{equation*}
\mathcal{R}_{YXU}(\theta) = \left\{(y,x,u): \exists y_0 \in \mathcal{Y}_0 \text{ s.t. }\max_{t\in \mathcal{T}_{0}}\{x_{t}\beta +\gamma
y_{t-1}+u_{t}\}\leq \min_{t\in \mathcal{T}_{1}}\{x_{t}\beta +\gamma
y_{t-1}+u_{t}\}  \right\}	\text{.}
\end{equation*}

The $U$-level set of this projection for any $(y,x)$ is 
\begin{equation}
\mathcal{U}\left( y,x;\theta \right) =\left\{ u:\exists y_{0}\in \mathcal{Y}%
_{0}\text{ s.t. }\max_{t\in \mathcal{T}_{0}}\{x_{t}\beta +\gamma
y_{t-1}+u_{t}\}\leq \min_{t\in \mathcal{T}_{1}}\{x_{t}\beta +\gamma
y_{t-1}+u_{t}\}\right\}\text{.}  \label{egref}
\end{equation}
In static models with $\gamma =0$ there is the further simplification
\begin{equation*}
\mathcal{U}\left( y,x;\theta \right) =\left\{ u:\max_{t\in \mathcal{T}%
_{0}}\{x_{t}\beta +u_{t}\}\leq \min_{t\in \mathcal{T}_{1}}\{x_{t}\beta
+u_{t}\}\right\} \text{.}
\end{equation*}

If observations in any periods $t$ are missing for a class of units, for
example if there is an unbalanced panel, then such $t$ are in neither $
\mathcal{T}_{0}$ nor $\mathcal{T}_{1}$ and the set $\mathcal{U}(y,x;\theta )$
leaves the value of $U_{t}$ unrestricted. The identification
analysis here still applies, and will result in inequalities that reflect
the lack of restrictions on $U_{t}$ in such periods. In a dynamic model with
the value of some $Y_{t}$ not observed that value becomes an additional
unobserved unit-specific variable in the determination of $Y_{t+1}$ removed \textit{via} projection along with other such variables. Unbalanced panels can be handled in the same manner.

To illustrate identification analysis in dynamic binary response panel
models consider the following example.\medskip

\noindent \textbf{Example 1: Two and three period binary response}.\medskip 

When $T=2$, if $Y_{0} = y_0$ is observed then $
\mathcal{Y}_{0}=\{y_{0}\}$ and (\ref{egref}) simplifies as follows. 
\begin{gather*}
\mathcal{U}((y_{0,}0,0),x;\theta )=\mathcal{R}_{U}\text{,} \\
\mathcal{U}((y_{0,}0,1),x;\theta )=\left\{ u:u_{21}^{\Delta }\geq
-x_{21}^{\Delta }\beta +\gamma y_{0}\right\}\text{,} \\
\mathcal{U}((y_{0,}1,0),x;\theta )=\left\{ u:u_{21}^{\Delta }\leq
-x_{21}^{\Delta }\beta +\gamma (y_{0}-1)\right\}\text{,} \\
\mathcal{U}((y_{0,}1,1),x;\theta )=\mathcal{R}_{U}\text{.}
\end{gather*}
Identification regions for $\theta $ using the inequalities arising when $Y=(0,1)$ and when $Y=(1,0)$ are provided by \cite{aristodemou2021semiparametric} for models in which $Y_{0}$ is observed and
either one of $U\mathbin{\vbox{\baselineskip=0pt\lineskip=0pt
  \moveright2.5pt\hbox{$\|$}
  \hrule height 0.2pt width 10pt}}X\mid Y_{0}$ or $U\mathbin{\vbox{\baselineskip=0pt\lineskip=0pt
  \moveright2.5pt\hbox{$\|$}
  \hrule height 0.2pt width 10pt}}X$ hold. \cite{khan2023identification}
provide sharp identification regions for $\theta $ in dynamic binary
response models for arbitrary finite $T$ under a conditional stationarity
restriction, with $Y_{0}$ observed.

By contrast, taking the projection approach, it is not necessary to have $Y_{0}$ observed. When $Y_{0}$ is not observed the set $\mathcal{U}
(y,x,\theta )$ is simply the union of the sets $\mathcal{U}(y,x;\theta )$ obtained on setting $y_{0}=0$ and then $y_{0}=1$. The sets
corresponding to $Y\in \{(0,0),(1,1)\}$ are unchanged; the others are as
follows. 
\begin{equation*}
	\mathcal{U}((0,1),x,\theta ) =\left\{ u:u_{21}^{\Delta }\geq
-x_{21}^{\Delta }\beta -\gamma ^{-}\right\}, \quad \mathcal{U}((1,0),x,\theta ) =\left\{ u:u_{21}^{\Delta }\leq
-x_{21}^{\Delta }\beta +\gamma ^{-}\right\}\text{,}
\end{equation*}
where $\gamma^- \equiv -\min\{\gamma,0\}$. Table \ref{Table: Binary T 3 Y0 observed} shows the sets $\mathcal{U}(y,x;\theta )$ for the case in which $T=3$ and $Y_{0}$ is observable.\footnote{The inequalities that appear here and in similar models involving threshold
crossing conditions and linear indexes are routine to derive using
Fourier-Motzkin elimination.} Table \ref{Table: Binary T 3 Y0 not observed} shows the sets when $T=3$ and $Y_{0}$ is
not observed.

For the case in which $T=3$ the sets $\mathcal{U}(y,x;\theta )$ can be expressed involving just $u_{31}^{\Delta }$
and $u_{32}^{\Delta }$, since $u_{21}^{\Delta }=u_{31}^{\Delta }-u_{32}^{\Delta }$, enabling visualization on the
space of $(U_{31}^{\Delta },U_{32}^{\Delta })$. The six nontrivial sets $\mathcal{U}(y,x;\theta )$ for each $y\in \mathcal{R}_{Y}$ are
depicted in Figure \ref{Figure:DynBinaryUsets} for the case in which $\theta
=(\gamma ,\beta )$ with $\gamma =1$ and $X=x$ such that $-x_{32}^{\Delta }\beta =-2$, and $-x_{31}^{\Delta }\beta =2$. $\square$
\begin{figure}[tbph]
\centering 
\includegraphics[scale = 0.42, trim = {100 20 50 0}, clip]{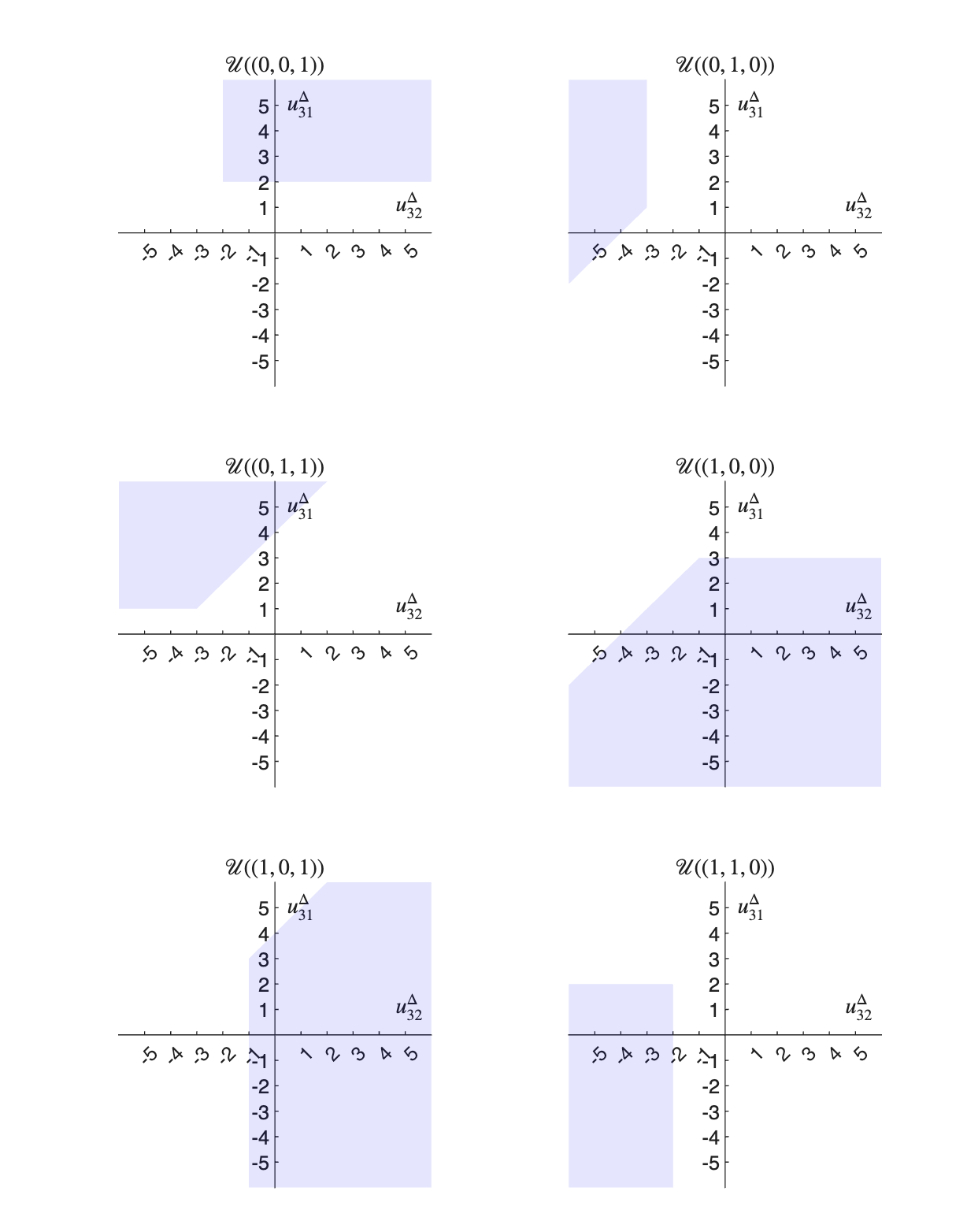}
\caption{The support of $\mathcal{U}(Y,X;\protect\theta)$ in a dynamic
binary response model with $X=x$ and $\protect\theta$ such that $-x^{\Delta}_{32}\protect\beta = -2$, $-x^{\Delta}_{31}\protect\beta = 2$, $\gamma = 1$ and unobservable initial condition, projected onto the space of $\left(U^{\Delta}_{31},U^{
\Delta}_{32}\right)$. Additionally, the sets $\mathcal{U}\left((0,0,0),x;\theta \right)$ and $\mathcal{U}\left((1,1,1 ),x;\theta\right)$ comprise the entire space and are therefore not illustrated.}
\label{Figure:DynBinaryUsets}
\end{figure}

\begin{table}[tbp] \centering%
\caption{$U$-level sets in a dynamic binary response model with $T=3$ and $Y_0$
observed.}\medskip 
\begin{tabular}{|c|c|}
\hline
$y$ & $\mathcal{U}(y,x;\theta )$ \\ \hline\hline
$(0,0,0)$ & $\mathcal{R}_{U}$ \\ \hline
$(0,0,1)$ & $\left\{ u:u_{31}^{\Delta }\geq -x_{31}^{\Delta }\beta +\gamma
y_{0}\text{ }\wedge \text{ }u_{32}^{\Delta }\geq -x_{32}^{\Delta }\beta
\right\} $ \\ \hline
$(0,1,0)$ & $\left\{ u:u_{21}^{\Delta }\geq -x_{21}^{\Delta }\beta +\gamma
y_{0}\text{ }\wedge \text{ }u_{32}^{\Delta }\leq -x_{32}^{\Delta }\beta
-\gamma \right\} $ \\ \hline
$(0,1,1)$ & $\left\{ u:u_{21}^{\Delta }\geq -x_{21}^{\Delta }\beta +\gamma
y_{0}\text{ }\wedge \text{ }u_{31}^{\Delta }\geq -x_{31}^{\Delta }\beta
+\gamma (y_{0}-1)\right\} $ \\ \hline
$(1,0,0)$ & $\left\{ u:u_{21}^{\Delta }\leq -x_{21}^{\Delta }\beta +\gamma
(y_{0}-1)\text{ }\wedge \text{ }u_{31}^{\Delta }\leq -x_{31}^{\Delta }\beta
+\gamma y_{0}\right\} $ \\ \hline
$(1,0,1)$ & $\left\{ u:u_{21}^{\Delta }\leq -x_{21}^{\Delta }\beta +\gamma
(y_{0}-1)\text{ }\wedge \text{ }u_{32}^{\Delta }\geq -x_{32}^{\Delta }\beta
+\gamma \right\} $ \\ \hline
$(1,1,0)$ & $\left\{ u:u_{31}^{\Delta }\leq -x_{31}^{\Delta }\beta +\gamma
(y_{0}-1)\text{ }\wedge \text{ }u_{32}^{\Delta }\leq -x_{32}^{\Delta }\beta
\right\} $ \\ \hline
$(1,1,1)$ & $\mathcal{R}_{U}$ \\ \hline
\end{tabular}%
\label{Table: Binary T 3 Y0 observed}%
\end{table}%

\begin{table}[tbp] \centering%
\caption{$U$-level sets in a dynamic binary response model with $T=3$ and $Y_0$
not observed.}\medskip 
\begin{tabular}{|c|c|}
\hline
$y$ & $\mathcal{U}(y,x;\theta )$ \\ \hline\hline
$(0,0,0)$ & $\mathcal{R}_{U}$ \\ \hline
$(0,0,1)$ & $\left\{ u:u_{31}^{\Delta }\geq -x_{31}^{\Delta }\beta -\gamma
^{-}\wedge u_{32}^{\Delta }\geq -x_{32}^{\Delta }\beta \right\} $ \\ \hline
$(0,1,0)$ & $\left\{ u:u_{21}^{\Delta }\geq -x_{21}^{\Delta }\beta -\gamma
^{-}\wedge u_{32}^{\Delta }\leq -x_{32}^{\Delta }\beta -\gamma \right\} $ \\ 
\hline
$(0,1,1)$ & $\left\{ u:u_{21}^{\Delta }\geq -x_{21}^{\Delta }\beta -\gamma
^{-}\wedge u_{31}^{\Delta }\geq -x_{31}^{\Delta }\beta -\gamma ^{+}\right\} $
\\ \hline
$(1,0,0)$ & $\left\{ u:u_{21}^{\Delta }\leq -x_{21}^{\Delta }\beta +\gamma
^{-}\wedge u_{31}^{\Delta }\leq -x_{31}^{\Delta }\beta +\gamma ^{+}\right\} $
\\ \hline
$(1,0,1)$ & $\left\{ u:u_{21}^{\Delta }\leq -x_{21}^{\Delta }\beta +\gamma
^{-}\wedge u_{32}^{\Delta }\geq -x_{32}^{\Delta }\beta +\gamma \right\} $ \\ 
\hline
$(1,1,0)$ & $\left\{ u:u_{31}^{\Delta }\leq -x_{31}^{\Delta }\beta +\gamma
^{-}\wedge u_{32}^{\Delta }\leq -x_{32}^{\Delta }\beta \right\} $ \\ \hline
$(1,1,1)$ & $\mathcal{R}_{U}$ \\ \hline
\end{tabular}%
\label{Table: Binary T 3 Y0 not observed}%
\end{table}%


\subsection{\label{Section: Ordered Outcome Panels}Ordered response models}

Consider a dynamic ordered response panel model with 
\begin{equation}
Y_{t}=\sum_{j=1}^{J}j\times 1\left[ \alpha _{j}\leq Y_{t}^{\ast }<\alpha
_{j+1}\right] ,\quad Y_{t}^{\ast }=X_{t}\beta +L_{t}\gamma +C+U_{t},\quad
t\in \lbrack T]\text{,}  \label{ordered response structural function}
\end{equation}
where $\alpha _{J+1}=-\alpha _{0}=\infty $, $\theta =\left( \alpha
_{1},...,\alpha _{J},\beta,\gamma\right) $, and $U$ is continuously distributed with full support on $\mathbb{R}^T$ conditional on $X$.\footnote{Thus as in Section \ref{Section: Binary Outcome Panels} there is no loss in using weak inequalities in expressions for $\mathcal{U}(y,x;\theta)$.}  In the static case $\gamma =0$.  

The outcome $Y_{t}$ is ordered categorical, taking the value $j\in
\{0,...,J\}$ if $Y_{t}^{\ast }\in \left[ \alpha _{j},\alpha _{j+1}\right) $,
where $Y_{t}^{\ast }$ is a latent index. The normalization $\alpha _{1}=0$
is imposed since one of the $\alpha _{j}$ parameters can be absorbed by
unit-specific variable $C$. The variable $L_{t}$ here denotes
functions of lagged outcomes\ in a model in which $t$ indexes time. For
example there could be $L_{t}=\iota _{t-1}$ where $\iota _{t-1}\equiv \left( 1\left[ Y_{t-1}=1\right] ,...,1\left[ Y_{t-1}=J
\right] \right)$
in a model with one-period lagged outcome dependence.\footnote{It is straightforward to accommodate multiple lags, for example two lags with $L_{t}=(\iota _{t-1},\iota _{t-2})$. The indicator for one value of $Y_{t-1}$, here $1\{Y_{t-1}=0\}$, is omitted by normalization as in \cite{honore2021dynamic}.} In this example $\gamma
=(\gamma _{1},...,\gamma _{J})'$ so $L_{t}\gamma =\gamma
_{j}$ if and only if $Y_{t-1}=j$.

Now expressions for the sets of values of $U$ that can occur given values of
observed variables are derived. These are the $U$-level sets of the projection $\mathcal{R}_{YXU}(\theta)$ for the ordered response structural function \eqref{ordered response structural function}.

To deal with cases in which the lag variable is not observed, define $\mathcal{L}(y,x;\theta )$ as the set of possible values of the lag variables 
$(L_{1},...,L_{T})$ given the observability of lagged outcomes. In this
exposition only one period lags are considered.

If $L_{t}=\iota _{t-1}$, and the initial value $Y_{0}$ is not
observed, then $\mathcal{L}(y,x;\theta )$ is the set of $\iota
_{0},...,\iota _{T-1}$ produced by observed $y_{1},...,y_{t-1}$ with any
value of $\iota _{0}$. If instead lag dependence manifests through
unobservable realizations $y^{\ast }$ as studied below, then $\mathcal{L}
(y,x;\theta )$ will restrict each $L_{t}$ to the interval implied by
observed realizations $y$.

To obtain the $U$-level set $\mathcal{U}(y,x;\theta )$ note that there is
for all $s$ and $t$%
\begin{eqnarray*}
\alpha _{y_{t}}-x_{t}\beta -l_{t}\gamma -u_{t} &\leq &C\leq \alpha
_{y_{t}+1}-x_{t}\beta -l_{t}\gamma -u_{t} \\
-\alpha _{y_{s}+1}+x_{s}\beta +l_{s}\gamma +u_{s} &\leq &-C\leq -\alpha
_{y_{s}}+x_{s}\beta +l_{s}\gamma +u_{s}
\end{eqnarray*}%
and upon adding%
\begin{equation*}
\alpha _{y_{t}}-\alpha _{y_{s}+1}-x_{ts}^{\Delta }\beta -l_{ts}^{\Delta
}\gamma -u_{ts}^{\Delta }\leq 0\leq \alpha _{y_{t}+1}-\alpha
_{y_{s}}-x_{ts}^{\Delta }\beta -l_{ts}^{\Delta }\gamma -u_{ts}^{\Delta }
\end{equation*}%
which leads to 
\begin{multline*}
\mathcal{U}(y,x;\theta )=\{u:\exists l\in \mathcal{L}\left( y,x;\theta
\right) \text{ s.t. }\forall s<t\in \lbrack T] \\
\alpha _{y_{t}}-\alpha _{y_{s}+1}-x_{ts}^{\Delta }\beta -l_{ts}^{\Delta
}\gamma \leq u_{ts}^{\Delta }\leq \alpha _{y_{t}+1}-\alpha
_{y_{s}}-x_{ts}^{\Delta }\beta -l_{ts}^{\Delta }\gamma \}\text{.}
\end{multline*}

In a static model with $\gamma =0$ there is no lag dependence and the
simplification 
\begin{equation*}
\mathcal{U}(y,x;\theta )=\{u:\forall s<t\in \lbrack T]\quad \alpha
_{y_{t}}-\alpha _{y_{s}+1}-x_{ts}^{\Delta }\beta \leq u_{ts}^{\Delta }\leq
\alpha _{y_{t}+1}-\alpha _{y_{s}}-x_{ts}^{\Delta }\beta \}\text{.}
\end{equation*}

In contrast to other approaches to dynamic ordered response panel models,
unobservable initial conditions can be accommodated using the projection
approach. Moreover, period $t$ lags can include functions of
both observable and latent variables, such as lagged values of the
unobserved index $Y^{\ast }$. This is important because in many applications
the ordered outcome $Y_{t}$ may depend not just on the value of 
$Y_{t-1}$ but on the location of $Y_{t-1}^{\ast }$ relative to the
thresholds $\alpha _{j}$. For example, if $Y_{t}$ is a categorical measure
of health status, the effect of current health on future health may be
different for two individuals in \textquotedblleft good\textquotedblright\
health, one of whom is close to the boundary for the \textquotedblleft
fair\textquotedblright\ health category, and the other close to the boundary
for the \textquotedblleft excellent\textquotedblright\ health category. Or,
in application to letter grades obtained in a sequence of courses, dynamic
impacts may be more effectively measured by a student's numerical score
rather than, say, whether they achieved an \textquotedblleft
A-\textquotedblright\ or \textquotedblleft B+\textquotedblright .

\subsubsection*{Lagged Outcome Dependence}

With one period lagged outcome dependence $
\mathcal{L}(y,x;\theta )=\mathcal{L}_{1}(y,x;\theta )\times \dots \times 
\mathcal{L}_{T}(y,x;\theta )$ where $\mathcal{L}
_{t}(y,x;\theta )=\{\iota _{t-1}\}$ for all $t>1$, and $\mathcal{L}_{1}(y,x;\theta )$ is the set of standard basis vectors in $\mathbb{R}^{J}$.

To illustrate consider such a model with two periods, three categories, $T=J=2$, and $Y_0$ not observed. When $Y=(0,0)$ or $Y=(2,2)$ any value of $U$ is possible, as there
is always $C$ small enough or large enough, respectively, to produce either
outcome. Table \ref{table:ordered response lag dependence U sets TJ2} shows the other $U$-level sets in this model.

\begin{table}[t]
\centering
\begin{tabular}{c|c} 
 \hline 
 \hline 
  $Y$ &  $\mathcal{U}\left(y,x;\theta \right)$\\
 \hline 
 $(0,1)$ &   $\left\{ u:u_{21}^{\Delta }\geq -x_{21}^{\Delta }\beta +\min \{0,\gamma _{1},\gamma _{2}\}\right\}$  \\
$(0,2)$ &  $\left\{ u:u_{21}^{\Delta }\geq \alpha_{2}-x_{21}^{\Delta }\beta +\min \{0,\gamma _{1},\gamma _{2}\}\right\}$  \\
$(1,0)$ & $\left\{ u:u_{21}^{\Delta }\leq
-x_{21}^{\Delta }\beta -\gamma _{1}+\max \{0,\gamma _{1},\gamma
_{2}\}\right\}$ \\
$(1,1)$ &  $\left\{ u:\exists y_{0}\text{ s.t.}-\alpha
_{2}-x_{21}^{\Delta }\beta \leq u_{21}^{\Delta }-\gamma _{y_{0}}+\gamma
_{1}\leq \alpha _{2}-x_{21}^{\Delta }\beta \right\}$ \\
$(1,2)$ &  $\left\{ u:u_{21}^{\Delta }\geq
-x_{21}^{\Delta }\beta -\gamma _{1}+\min \{0,\gamma _{1},\gamma
_{2}\}\right\}$ \\
$(2,0)$ &  $\left\{ u:u_{21}^{\Delta }\leq -\alpha
_{2}-x_{21}^{\Delta }\beta -\gamma _{2}+\max \{0,\gamma _{1},\gamma
_{2}\}\right\}$ \\
$(2,1)$ &  $\left\{ u:u_{21}^{\Delta }\leq
-x_{21}^{\Delta }\beta -\gamma _{2}+\max \{0,\gamma _{1},\gamma
_{2}\}\right\}$ \\
 \hline 
 \hline
\end{tabular}
\caption{$U$-level sets for the dynamic ordered response panel model with lagged outcome dependence, $J=T=2$, and $Y_0$ not observed.}
\label{table:ordered response lag dependence U sets TJ2}
\end{table} 

\subsubsection*{Lagged Latent Dependence}

Now consider the case in which the period $t$ outcome depends on the lagged
latent index $Y_{t-1}^{\ast }$ so $L_{t}=Y_{t-1}^{\ast }$. Repeated
substitution for values of $Y_{s}^{\ast }$, $s<t$, in the equation for $Y_{t}^{\ast }$ in \eqref{ordered response structural function} gives 
\begin{equation*}
Y_{t}^{\ast }=\gamma ^{t}Y_{0}^{\ast }+\sum_{s=1}^{t} \gamma ^{t-s} \bigl(X_{s}\beta
+C+U_{s}\bigr)\text{.}
\end{equation*}
So the set of values of $U$ that deliver $Y=y$ when $X=x$ is 
\begin{equation*}
\mathcal{U}(y,x;\theta )=\left\{ u:\exists (y_{0}^{\ast },c)\text{ s.t. }%
\forall t\in \lbrack T],\text{ }\alpha _{y_{t}}\leq \gamma ^{t}y_{0}^{\ast
}+\sum_{s=1}^{t} \gamma ^{t-s} \bigl(x_{s}\beta
+c+u_{s}\bigr)\leq \alpha _{{y_{t}+1}}\right\} 
\text{.}
\end{equation*}

The set $\mathcal{U}(y,x;\theta )$ differs from the case with lagged outcome
dependence. The inequalities defining $\mathcal{U}(y,x;\theta )$ are linear
in $c$ and $y_{0}^{\ast }$ so Fourier-Motzkin elimination can be used to
remove these variables from the inequalities that define $\mathcal{U}(y,x;\theta )$. 

\subsection{\label{Section: Discrete Outcomes Identified Sets}Identified
sets}

With the $U$-level sets defined as in discrete outcome models such as those presented in Sections \ref{Section: Binary Outcome Panels} and \ref{Section: Ordered Outcome Panels}, moment inequality characterizations of identified sets for common parameters can be obtained using Artstein's inequality.\footnote{Artstein's inequality is established in \cite{artstein1983distributions}, see also \cite{Molchanov:17} pages 83--84 and \cite{Molchanov/Molinari:18} Section 2.2.} The inequality provides the following corollary to Proposition \ref{theorem: profile V}.

\begin{corollary}\label{Containment Corollary} Suppose that Restrictions PM and RCS hold.  Then the identified set for $\left(f,G_{U|X}(\cdot|\cdot)\right)$ comprises those pairs $\left(f,G_{U|X}(\cdot|\cdot)\right)\in \mathcal{M}$ such that
\begin{equation}\label{Artstein's inequality}
\mathbb{P}\left[\mathcal{U}(Y,X;f) \subseteq \mathcal{S} |X=x \right] \leq G_{U|X}\left(\mathcal{S} |x\right), \text{ a.e. } x\in\mathcal{R}_X\text{.}	
\end{equation}
for all closed $\mathcal{S} \subseteq \mathcal{R}_U$. The identified set for $f$ is the set of $f$ such that (\ref{Artstein's inequality}) holds for $f$ and some $G_{U|X}(\cdot|\cdot)\in \mathsf{G}_{U|X}$ with $\left(f,G_{U|X}(\cdot|\cdot)\right)\in \mathcal{M}$.
\end{corollary}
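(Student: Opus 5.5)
The plan is to start from the second characterization in Proposition \ref{theorem: profile V}, namely \eqref{Ustar selectionability}. That result says $(f,G_{U|X}(\cdot|\cdot))\in\mathcal{M}$ belongs to $\mathcal{I}(\mathcal{M},F_{YX})$ if and only if, for a.e.\ $x$, $G_{U|X}(\cdot|x)$ is selectionable with respect to the conditional distribution of $\mathcal{U}(Y,X;f)$ given $X=x$, where $Y\sim F_{Y|X}(\cdot|x)$. The individual effects have already been removed by that proposition, so all that remains is to translate selectionability into the containment inequalities \eqref{Artstein's inequality}. The translation will be made conditionally on $X=x$, one value of $x$ at a time, and the a.e.\ quantifier is then carried through.

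The first step is to pass to a random closed set. Artstein's theorem, in the form of \cite{Molchanov:17}, pp.~83--84, applies to random closed sets in a Polish space. It says a distribution $\mu$ is selectionable with respect to the distribution of a random closed set $\mathcal{A}$ if and only if $\mathbb{P}[\mathcal{A}\subseteq\mathcal{S}]\le\mu(\mathcal{S})$ for all closed $\mathcal{S}$; equivalently, $\mu(\mathcal{K})\le\mathbb{P}[\mathcal{A}\cap\mathcal{K}\neq\emptyset]$ for all compact $\mathcal{K}$. $\mathcal{U}(Y,X;f)$ need not be closed, so I will use Restriction RCS through Lemma 1 of the Appendix. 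The claim to establish is that, conditional on $X=x$, $G_{U|X}(\cdot|x)\preceq\mathcal{U}(Y,X;f)$ if and only if $G_{U|X}(\cdot|x)\preceq\mathsf{cl}(\mathcal{U}(Y,X;f))$.

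The forward direction is immediate, because a selection of a set is a selection of its closure. For the converse, take a selection $\tilde U\in\mathsf{cl}(\mathcal{U}(\tilde Y,x;f))$ with $\tilde U\sim G_{U|X}(\cdot|x)$. By RCS, $G_{U|X}(\mathsf{cl}(\mathcal{U}(y,x;f))\setminus\mathcal{U}(y,x;f)\,|\,x)=0$ for a.e.\ $y$. Hence, on integrating, $\mathbb{P}[\tilde U\in \mathsf{cl}(\mathcal{U})\setminus\mathcal{U}]=0$, and $\tilde U$ is a.s.\ a selection of $\mathcal{U}$ itself. This step is the main obstacle. The difficulty is that RCS controls the $G$-mass of a fixed, $y$-indexed boundary set, whereas $\tilde U$ is jointly distributed with $\tilde Y$ rather than independent of it. I would handle it by conditioning on $\tilde Y=y$, using a regular conditional distribution, or by simply invoking Lemma 1 as stated. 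If necessary I would modify the selection on a null set to restore measurability.

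Next I apply Artstein's inequality to the closed set $\mathsf{cl}(\mathcal{U}(Y,X;f))$ conditional on $X=x$. This gives $\mathbb{P}[\mathsf{cl}(\mathcal{U}(Y,X;f))\subseteq\mathcal{S}\,|\,X=x]\le G_{U|X}(\mathcal{S}|x)$ for all closed $\mathcal{S}$. For closed $\mathcal{S}$, $\mathsf{cl}(\mathcal{U})\subseteq\mathcal{S}$ if and only if $\mathcal{U}\subseteq\mathcal{S}$, so the closure can be dropped on the left-hand side, which yields \eqref{Artstein's inequality}. Imposing this for a.e.\ $x$ gives the characterization of the identified set for pairs. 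The statement for $f$ alone then follows directly, since $f$ is in the identified set exactly when some $G_{U|X}(\cdot|\cdot)\in\mathsf{G}_{U|X}$ with $(f,G_{U|X})\in\mathcal{M}$ satisfies the inequality.
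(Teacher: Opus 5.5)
Your route is the paper's: Proposition \ref{theorem: profile V}, then Lemma \ref{Lemma RCS} to pass to $\mathsf{cl}(\mathcal{U}(Y,X;f))$, then Artstein's inequality (cited in the paper as Corollary 1 of Chesher and Rosen (2017)), dropping the closure on the left because $\mathcal{S}$ is closed. The necessity direction is fine, and you are right that the RCS step is the crux. But that step is a genuine gap, and neither of your remedies closes it.

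Conditioning on $\tilde Y=y$ fails because the conditional law of $\tilde U$ given $\tilde Y=y$ is not $G_{U|X}(\cdot|x)$. RCS says nothing about that conditional law, and the coupling may place all of it on the boundary. Invoking Lemma \ref{Lemma RCS} ``as stated'' does not help either: its proof makes the same move, and the claim fails in general. Here is a counterexample.
\begin{itemize}
\item Take $T=1$, $X$ degenerate, $\mathcal{R}_V=(0,1)$ and $f(x,v,u)=\{u-v\}$, so $\mathcal{U}(y,x;f)=(y,y+1)$. Let $Y\sim N(0,1)$ and $G_{U|X}=N(0,1)$.
\item The boundary $\{y,y+1\}$ is $G$-null, so RCS holds.
\item $\tilde U=\tilde Y$ selects the closure $[\tilde Y,\tilde Y+1]$ and has the right law, so every inequality \eqref{Artstein's inequality} holds.
\item Any selection of $(\tilde Y,\tilde Y+1)$ with law $G$ would satisfy $\tilde U-\tilde Y\in(0,1)$ a.s. while $E[\tilde U-\tilde Y]=0$, which is impossible.
\end{itemize}
So this pair satisfies all the inequalities but is not in $\mathcal{I}(\mathcal{M},F_{YX})$.

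What is missing is a reason why the random boundary event $\{\tilde U\in D(\tilde Y,x)\}$ is null under the coupling, where $D(y,x)\equiv\mathsf{cl}(\mathcal{U}(y,x;f))\setminus\mathcal{U}(y,x;f)$. RCS supplies one when $Y$ has countable support given $X=x$, which is the case in the discrete-outcome models where the corollary is used. Summing over the atoms $y$ of $F_{Y|X}(\cdot|x)$, each of which RCS covers,
\[
\mathbb{P}[\tilde U\in D(\tilde Y,x)\mid X=x]=\sum_{y}\mathbb{P}[\tilde Y=y,\ \tilde U\in D(y,x)\mid X=x]\le\sum_{y}G_{U|X}(D(y,x)\mid x)=0 .
\]
So either add the countable-support hypothesis and use this union bound in place of your conditioning argument, or replace RCS by a condition that controls the boundary jointly, for example that $\mathcal{U}(Y,X;f)$ is closed a.s. With either change, the rest of your argument goes through as written.
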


It will now be demonstrated how this corollary can be specialized to produce moment inequality characterizations of identified sets for common parameters. Prior applications of this inequality in the partial identification literature include \cite{Beresteanu/Molchanov/Molinari:10} and \cite{Chesher/Rosen:17}, see \cite{Molinari:Handbook} for further references. The novelty here is not in the use of Artstein's inequality for identification analysis, but rather its application to level sets of the projection $\mathcal{R}_{YXU}$ obtained by removal of incidental parameters, and under distributional restrictions commonly found in panel models having no counterpart in cross section models.

The characterization of the identified set provided by Corollary \ref{Containment Corollary} using Artstein's inequality comprises for each $x \in \mathcal{R}_X$ as many inequalities as the number of closed sets in $\mathcal{R}_U$. Previous papers such as \cite{Galichon/Henry:09}, \cite{Chesher/Rosen:17}, and \cite{Luo/Ponomarev/Wang:25} have characterized \textit{core determining collections} that comprise a smaller collection of sets $\mathcal{S}$ such that if \eqref{Artstein's inequality} holds for all $\mathcal{S}$ in the collection, then it holds for all closed sets. To the best of our knowledge these results have not been previously employed in panel models but they are applicable here to simplify characterizations of identified sets delivered by Artstein's inequality.

With the characterizations of the $U$-level sets of Sections \ref{Section: Binary Outcome Panels} and \ref{Section: Ordered Outcome Panels} the inequality \eqref{Artstein's inequality} delivers observable implications for the common parameters. This is now illustrated in the context of Example 1 in Section \ref{Section: Binary Outcome Panels}. The same steps can be taken to characterize identified sets for the ordered response panel models of Section \ref{Section: Ordered Outcome Panels}.

\noindent\textbf{Example 1, continued}: Consider the dynamic binary panel model with unobservable initial condition and $T=3$. The $U$-level sets for a particular $x$ and $\theta$ are illustrated in Figure \ref{Figure:DynBinaryUsets}. We can see immediately from the figure that with, for example, $\mathcal{S} = \left\{u: u^{\Delta}_{31} \geq u^{\Delta}_{32} \right\}$ the inequality \eqref{Artstein's inequality} becomes $\mathbb{P}\left[Y \in \{(0,1,0),(0,1,1) \} |X=x\right] \leq G_{U|X}\left(\mathcal{S}|x\right)$. This set is however not amongst the minimal core determining collection.

From Theorem 1 of \cite{Chesher/Rosen:17} it follows that only sets $\mathcal{S}$ that comprise unions of sets on the support of $\mathcal{U}(Y,X;f)$ need consideration. Theorem 3 of that paper establishes that among this collection, one need not consider those sets $\mathcal{S}$ that can be partitioned into two sets $\mathcal{S}_1$ and $\mathcal{S}_2$ such that either $\mathcal{U}(Y,X;f) \subseteq \mathcal{S}_1$ or $\mathcal{U}(Y,X;f) \subseteq \mathcal{S}_2$, but not both simultaneously. Such a set $\mathcal{S}$ is not \textit{self-connected} in the terminology of \cite{Luo/Ponomarev/Wang:25}.\footnote{That paper also shows that it is generally possible to achieve further refinement, establishing that among the class of sets comprising unions of sets on the support of $\mathcal{U}(Y,X;f)$, those that are both self-connected and complement-connected comprise a minimal core-determining collection. However, in the panel models studied here in which $\mathcal{U}(Y,X;f) = \mathcal{R}_U$ for some $y \in \mathcal{R}_Y$, the requirement that sets be complement-connected provides no reduction in the core-determining collection.} The minimal core determining collection of sets for the value of $x$ and $\theta$ that produce the $U$-level sets of Figure \ref{Figure:DynBinaryUsets} yields 32 moment inequalities of the form \eqref{Artstein's inequality}. $\square$
\smallskip

If $U$ and $X$ are stochastically independent there is the following simplification of Artstein's inequality
 \begin{equation}\label{Artstein's inequality independence}
\sup_{x \in \mathcal{R}_X}\mathbb{P}\left[\mathcal{U}(Y,X;f) \subseteq \mathcal{S} |X=x \right] \leq G_{U}\left(\mathcal{S}\right)\text{.}
\end{equation}
This applies with both parametric and nonparametric restrictions on the class of functions $f$ and distributions $G_{U}(\cdot|\cdot)$ admitted by the model.  For example, if the utility function and distribution of unobservable heterogeneity are parametrically specified up to $\theta \in \Theta \subseteq \mathbb{R}^d$ in \eqref{Artstein's inequality independence}, $f$ may be replaced by $\theta$ and $G_U(\mathcal{S})$ by $G_U(\mathcal{S};\theta)$.

Even when using only core determining collections of sets, the number of inequalities can be large. Nonetheless, approaches for asymptotic inference with infinitely many conditional moment inequalities can be used, see for instance Section 2.2 of \cite{Chernozhukov/Chetverikov/Kato:19} and Example 2 of \cite{Andrews/Shi:17}.

The next section shows how identification analysis can proceed under nonparametric specifications of the distribution of within-unit-varying heterogeneity.

\subsection{Nonparametric distributional specifications}\label{Subsec: NP Distributions}

An advantage of the projection approach developed in this paper is that it enables identification analysis when there are neither parametric nor stationarity restrictions on the distribution of
within-unit-varying heterogeneity. In Section \ref{Sec: continuous outcomes} it was shown how this can be
achieved using moment restrictions. Here are two alternative approaches more suited to models
of discrete outcomes. Other such restrictions are possible.

Consider models such as the discrete outcome models considered in this section in which $U$ level sets are
determined entirely by restrictions on differences $u_{st}^{\Delta }$ for a
collection of values of $s$ and $t$.

First consider \emph{quantile independence restrictions}. For chosen values
of $s$ and $t$, first define an ascending sequence of quantile probabilities 
\begin{equation*}
p_{st}\equiv (p_{st}^{0},p_{st}^{1},\dots ,p_{st}^{K},p_{st}^{K+1})
\end{equation*}
which are specified values in $\left[ 0,1\right] $ with $p_{st}^{0}\equiv 0$ and $p_{st}^{K+1}\equiv 1$. Then define additional parameters, namely the
unknown elements of 
\begin{equation*}
\lambda _{st}\equiv \{\lambda _{st}^{0},\lambda _{st}^{1},\dots ,\lambda
_{st}^{K},\lambda _{st}^{K+1}\}
\end{equation*}
with $\lambda _{st}^{0}\equiv -\infty $, $\lambda _{st}^{K+1}\equiv +\infty $. These new parameters are values of the quantiles of the marginal
distributions of the $U_{st}^{\Delta }$ at the chosen quantile
probabilities, e.g. $(0.25,0.5,0.75)$, and there are the restrictions\footnote{It is easy to impose restrictions of  symmetry and unimodality if that were desired.}
\begin{equation*}
\forall x\in \mathcal{R}_{X}\quad \mathbb{P}[U_{st}^{\Delta }\leq \lambda
_{st}^{k}|X=x]\equiv p_{st}^{k}\quad \text{free of }x\text{.}
\end{equation*}

Moment inequalities characterizing the identified set of values of the
parameters $\theta $ and the quantile values in $\lambda _{st}$ are
\begin{equation*}
\forall k\in \{1,\dots ,K+1\}\quad \sup_{x\in \mathcal{R}_{X}}\mathbb{P}[\mathcal{U}(Y,X;\theta )\subseteq \{u:\lambda _{st}^{k-1}\leq u_{st}^{\Delta
}\leq \lambda _{st}^{k}\}|X=x]\leq p_{st}^{k}-p_{st}^{k-1}
\end{equation*}
\begin{equation*}
\forall k\in \{1,\dots ,K\}\quad \sup_{x\in \mathcal{R}_{X}}\mathbb{P}[
\mathcal{U}(Y,X;\theta )\subseteq \{u:u_{st}^{\Delta }\leq \lambda
_{st}^{k}\}|X=x]\leq p_{st}^{k}
\end{equation*}
for all pairs $(s,t)$ for which the quantile independence restrictions are
maintained. Identified sets for $\theta$ are obtained as those values of $\theta$ for which there exist values of $\lambda_{st}$ such that all such inequalities are satisfied.
\cite{Chesher/Kim/Rosen:23} gives details and has an example of this
approach in action in a different, non-panel, context.\footnote{\cite{Chesher/Kim/Rosen:23} studies an IV Tobit model in which explanatory
variables may be endogenous. Proposition 4 in Section 4.2 deals with
quantile independence restrictions and is easily extended to the models
considered in this paper.}

Finally consider \emph{pairwise conditional exchangeability restrictions}
requiring that, for some chosen $s$ and $t$, $U_{s}$ and $U_{t}$ are
exchangeable conditional on $X$. Under this restriction the median of $%
U_{st}^{\Delta }$ conditional on $X=x$ is zero for all $x$. The inequalities
above with $K=1$, $\lambda _{st}^{1}=0$ and $p_{st}^{1}=0.5$ deliver bounds
on $\theta $ absent parametric restrictions on the distribution of $U$.%
\footnote{Under the conditional exchangeability restriction the probability density
function of $U_{st}^{\Delta }$ is symmetric around zero. This may deliver
additional bounds in some cases.} 

\noindent\textbf{Example 1, continued}: Consider again the dynamic binary outcome model as in \eqref{binary model} with unobserved initial value $Y_{0}$ and $T=3$ with $U$-level sets shown in Table \ref{Table: Binary T 3 Y0 not observed}.

The zero median independence restriction implied by pairwise conditional
exchangeability of all elements of $U$ delivers the identified set of values
of $(\beta ,\gamma )$ as those satisfying
\begin{equation}\label{median inequalities}
\bigwedge\limits_{j=1}^{6}\left( \sup_{x\in \mathcal{R}_{X}}w_{j}(x,\beta
,\gamma )\leq \frac{1}{2}\right) 
\end{equation}
where the expressions $w_{j}(x,\beta ,\gamma )$ are shown in Table \ref{Table: CExch}. In this table, $p_{x}(y_{1},y_{2},y_{3})\equiv \mathbb{P}%
[Y=(y_{1},y_{2},y_{3})|X=x]$ and the column headed \textquotedblleft $%
\mathcal{S}$\textquotedblright\ shows the set $\mathcal{S}$ in the zero
median independence restriction 
\begin{equation*}
\forall x\in \mathcal{R}_{X},\quad \mathbb{P}[U\in \mathcal{S}|X=x]=1/2
\end{equation*}
that delivers each row of the table. $\square$

\begin{table}[tbp] \centering%
\caption{Expressions $w_{j}(x,\beta ,\gamma )$ in \eqref{median inequalities} required to be no greater than $1/2$ under a restriction that each $U_s$ and $U_t$ are exchangeable conditional on $X$ with  $T=3$. }\medskip 
\begin{tabular}{|c|c|c|}
\hline
$j$ & $\mathcal{S}$ & $w_{j}(x,\beta ,\gamma )$ \\ \hline\hline
$1$ & $\{u:u_{31}^{\Delta }\leq 0\}$ & \multicolumn{1}{|l|}{$%
p_{x}(1,1,0)\times 1[-x_{31}^{\Delta }\beta +\gamma ^{-}\leq
0]+p_{x}(1,0,0)\times 1[-x_{31}^{\Delta }\beta +\gamma ^{+}\leq 0]$} \\ 
\hline
$2$ & $\{u:u_{31}^{\Delta }\geq 0\}$ & \multicolumn{1}{|l|}{$%
p_{x}(0,0,1)\times 1[-x_{31}^{\Delta }\beta -\gamma ^{-}\geq
0]+p_{x}(0,1,1)\times 1[-x_{31}^{\Delta }\beta -\gamma ^{+}\geq 0]$} \\ 
\hline
$3$ & $\{u:u_{32}^{\Delta }\leq 0\}$ & \multicolumn{1}{|c|}{$%
p_{x}(0,1,0)\times 1[-x_{32}^{\Delta }\beta -\gamma \leq
0]+p_{x}(1,1,0)\times 1[-x_{32}^{\Delta }\beta \leq 0]$} \\ \hline
$4$ & $\{u:u_{32}^{\Delta }\geq 0\}$ & \multicolumn{1}{|c|}{$%
p_{x}(0,0,1)\times 1[-x_{32}^{\Delta }\beta \geq 0]+p_{x}(1,0,1)\times
1[-x_{32}^{\Delta }\beta +\gamma \geq 0]$} \\ \hline
$5$ & $\{u:u_{21}^{\Delta }\leq 0\}$ & \multicolumn{1}{|c|}{$%
\left(p_{x}(1,0,0) + p_{x}(1,0,1)\right) \times 1[-x_{21}^{\Delta }\beta +\gamma ^{-}\leq
0]$} \\ 
\hline
$6$ & $\{u:u_{21}^{\Delta }\geq 0\}$ & \multicolumn{1}{|c|}{$%
\left(p_{x}(0,1,0) +p_{x}(0,1,1) \right) \times 1[-x_{21}^{\Delta }\beta -\gamma ^{-}\geq
0]$} \\ 
\hline
\end{tabular}%
\label{Table: CExch}%
\end{table}%

\subsection{Related Literature on Discrete Outcome Panel Models}\label{Sec: Related Lit Discrete Outcomes}

Analysis of binary response panel models has a long history going back to Rasch (1960, 1961),\nocite{rasch1960probabilistic} 
\nocite{rasch1961general} \cite{andersen1970asymptotic}, and see also \cite{chamberlain2010binary}, in which a static model is studied with the elements of $U$ restricted to be i.i.d. logistic, independent of $(C,X)$. With these distributional restrictions $\beta$ is
point-identified under a rank condition and consistently estimated by a
conditional maximum likelihood estimator that conditions on $\sum\limits_{t=1}^T Y_t$.\footnote{A precise statement of the rank condition is provided as Assumption 2 in 
\cite{Davezies/D'Haultfoeuille/Laage:22}.}

Extensions of panel logit models to dynamic models have been considered. \cite{honore2000panel} provides results for identification and estimation of
a dynamic binary panel model maintaining mutual independence of all elements of $U$ and
independence of $U$ and $(C,X)$, and in most cases restricting the elements of $U$ to be logistically distributed. \cite{Kitazawa:22}, \cite{Dano:23}, and \cite{Honore/Weidner:25} provide moment equations in dynamic panel logit models, which can be used to study identification and estimation of common parameters. \cite{Dobronyi/Gu/Kim/Russell:25} analyze the full
likelihood from the dynamic panel logit model and make a connection to the
truncated moment problem to obtain all of the model's observable
implications. That paper and \cite{Davezies/D'Haultfoeuille/Laage:22} also
provide characterizations of certain average and marginal effects.

An alternative to these logit specifications in the binary outcome panel model is a conditional \textit{stationarity} restriction introduced in \cite{manski1987semiparametric}, requiring that conditional on $(X,C)$ the variables $U_1,...,U_T$ all have the same marginal distribution. This is implied by the panel logit distributional restriction, but is weaker. It does not require independence of $U$ and $(X,C)$, and it can
allow for correlation in the components of $U$. Nonetheless it does restrict
the joint distribution of $U$ and $C$.\footnote{As pointed out in \cite{Chernozhukov/Fernandez-Val/Hahn/Newey:09} the
stationarity restriction $U_t|C,X \overset{d}{=}U_1|C,X$ for all $t$ is
equivalent to $(U_t,C)|X \overset{d}{=}(U_1,C)|X$ for all $t$.}

Conditional stationarity restrictions have been
used in several papers. \cite{Abrevaya:2000} studies a class of generalized regression models that nests binary response and censored outcome models under conditional stationarity and stronger restrictions. 
\cite{Chernozhukov/Fernandez-Val/Hahn/Newey:09} characterizes bounds for average
and quantile effects in several nonseparable panel models, including binary
response models. \cite{khan2023identification} provides set
identification results for common parameters in semiparametric dynamic
binary response panel models. Conditional stationarity restrictions have
also been used in multinomial response panel models, for example in \cite{shi2018estimating}, \cite
{khan2021inference}, \cite{pakes2022moment}, \cite{Pakes/Porter/Shepard/Calder-Wang:25}, \cite{Gao/Wang:24}, \cite{Gao/Li:20}, and \cite{mbakop2023identification}. 


\cite{aristodemou2021semiparametric} is the one paper of which we
are aware that studies the binary response specification \eqref{binary model}
without restricting the covariation of $C$ with either $X$ or $U$. In that
paper $U$ and $X$ are restricted to be independently distributed, in some
cases conditional on an initial condition, but,
importantly, not conditional on $C$. That paper provides bounds on
parameters in the models studied but does not claim sharpness. The
projection approach delivers characterizations of sharp identified sets and
applies more broadly, for example allowing arbitrary $T$, unobserved initial
conditions, and alternative restrictions on the joint distribution of $U$
and $X$.


The literature on fixed effects models of dynamic ordered response panels is
recent and not extensive. \cite{honore2021dynamic} employ functional
differencing to provide moment
conditions that can be used as a basis for estimation and inference in
models in which the elements of $U$ are i.i.d. logistically distributed and
independent of $X$ and $C$. In a model with these distributional
restrictions on $U$ but an alternative lag dependence specification $%
L_{t}=1[Y_{t-1}\geq k]$ for specified $k$, \cite{Muris/Raposo/Vandoros:25}
develop a conditional maximum likelihood estimator building on insights from 
\cite{honore2000panel}. References to the broader literature on ordered
response panel models, including random effects approaches and static
models, can be found in these papers. An unpublished chapter of 
\cite{Aristodemou:16} provides bounds on common parameters in
some fixed effects ordered response panel models when $T=2$.\footnote{The models studied in Chapter 6 of \cite{Aristodemou:16}, like those studied
in this paper, impose no restrictions on the joint distribution of $C$ and $U $, although the analysis requires an observed initial condition,
independence restrictions conditional on the initial condition, and does not
allow dependence on lagged latent variables $Y^{\ast }$ as is allowed here.
Moreover, non-sharp outer bounds are obtained. However, in contrast to \cite{honore2000panel} and \cite{Muris/Raposo/Vandoros:25}, in \cite{Aristodemou:16}, as here, no logistic or other parametric distributional
restriction on $U$ is required.}

\section{Numerical Illustrations of Identified Sets}\label{Sec: Numerical Illustrations}

Illustrations of identified sets for $(\beta ,\gamma )$
are presented for the CES model of Section \ref{Sec: CES model}. Both weak and strict exogeneity restrictions, MW and MS, are considered.

As in Section \ref{Sec: CES model}, the model features firm-specific unobservables $C$ and $S$ in the CES production function. For the sake of illustration, a data generation process is considered in which $T=3$, and, although it is unknown to the econometrician, $C=S=0$ for all firms, so output follows a
Cobb-Douglas specification
\[
Y_{t}=\beta _{0}\left( \gamma_0 \log (X_{t1})+(1-\gamma _{0})\log
(X_{t2})\right) +U_{t}\text{,}\qquad \quad t\in \{1,2,3\}.
\]
This is chosen for simplicity, but calculations are easily done for more complex cases.

Let $X$ denote the $3$x$2$ matrix with elements $X_{tj}$. To determine the support of $X$, values of its elements $X_{t1}$ and $X_{t2}$ were drawn i.i.d. with $\log X_{tj} \sim N(0,1/4)$ to produce $50$ support points, each of which was given equal probability.

The identified set is given by the inequalities  \eqref{CES MS inequalities 2}, equivalently for each $r$:
\begin{equation}\label{SF inequalities C2 v2}
E\left[\max_{s \in \left[-\infty,1\right]} \sum_{t=1}^T w_t(r,X) \left(Y_t  - \beta  \log g(X_{t1},X_{t2},\gamma,s) \right) \right] -
E \left[ \inf_{c \in \mathbb{R}} \sum_{t=1}^T w_t(r,X) c \right]	\geq 0\text{.}
\end{equation}
For $r$ such that $\operatorname{Pr}\left[
\sum_{t=1}^{T}w_{t}(r,X)=0\right] <1$, the last term
can be made arbitrarily large and \eqref{SF inequalities C2 v2} will be satisfied for any such $r$. Therefore we have the characterization
\begin{equation}
\min_{r : \lVert r \rVert =1} E\left[\max_{s \in \left[-\infty,1\right]} \sum_{t=1}^T w_t(r,X) \left(Y_t  - \beta  \log g(X_{t1},X_{t2},\gamma,s) \right) \right] \geq 0\text{,}	
\end{equation}
for which we need only consider values of $r$ such that $
\sum_{t=1}^{T}w_{t}(r,X)=0$ almost surely. If the support of $X$ is such that there exists no proper linear subspace of $\mathbb{R}^{2T+1}$ that contains $(1,X_1,...,X_T)$ almost surely, as is the case in this illustration, this is equivalent to imposing the restrictions
\begin{equation}\label{adding up condition}
\sum_{t=1}^T r_t = 0 \quad \text{and} \quad \sum_{t=1}^T r_{thk}=0\text{, for all } h,k\text{.}
\end{equation}	
This has the effect of removing $c$ from the
inequality.

On replacing $Y_{t}$ by its expectation conditional on $X$, denoted $\mu_t(X)$, there is:
\begin{equation}\label{SF inequalities C3}
\min_{r\in\mathcal{R}}  E \left[  \max_{s \in \left[-\infty,1\right] }\sum_{t=1}^T w_t(r,X) \biggl(\mu_t(X)-  \beta   \log g(X_{t1},X_{t2},\gamma,s) \biggr) \right]	\geq 0\text{,}
\end{equation}
where $\mathcal{R}$ is the set of $r$ such that $\Vert r \rVert = 1$ and \eqref{adding up condition} holds.

Weak and strict exogeneity restrictions are distinguished by additionally imposing $r_{thk}=0$ for all $t<h$ under weak exogeneity, as described in Section \ref{Sec: Moment Restrictions}.

The maximisation with respect to $s$ is done using the modified golden
section method provided by the \texttt{optimise} function of \textsf{R}.\footnote{\cite{R2025}: \url{https://www.R-project.org/}.} Maximisation is done with respect to two alternative monotone
transformations of $s\in (-\infty ,1]$ to the unit interval:
\[
\lambda _{1}(s)=\exp (s-1)\text{,}\quad \quad \quad \lambda _{2}(s)=\frac{1}{\pi }
\arctan (-\log (1-s))+\frac{1}{2}\text{.}
\]
Any internal maxima that are found are compared with the values obtained at $s=-\infty $ and $s=1$ and the largest value is chosen. The expectation in \eqref{SF inequalities C3} is obtained as the probability-weighted sum over the support of $X$.

Under the strict exogeneity Restriction MS stated in Section \ref{Sec: CES model}, with $T=3$ there are $21$ elements in $r$ subject to $\lVert r \rVert =1$ and the seven restrictions \eqref{adding up condition}. Under the weak exogeneity Restriction MW there are an additional six restrictions $r_{thk} = 0$ for $t < h$.\footnote{Note that imposing $r_{thk}=0$ for $t<h$ in \eqref{SF inequalities C3} corresponds to a less restrictive model than when $r_{thk}=0$ for $t<h$ is not imposed.}  Even in the case with $r$ more heavily restricted, minimization is hard to calculate precisely so an alternative calculation is done that delivers an outer region. The bounds obtained nonetheless demonstrate the informativeness of the CES model with multiple fixed effects, one entering nonlinearly.

The calculation proceeds by drawing $5000$ pseudo-random standard Gaussian
values of $r$ which are subjected to the required restrictions. A value of $(\beta ,\gamma )$ is deemed out of the identified set if the inequality \eqref{SF inequalities C3} is violated at any of the values of $r$ considered. The same
stream of $5000$ pseudo-random values of $r$ are employed as each value of $(\beta ,\gamma )$ is considered.

Figure \ref{Figure: Numerical Illustration} shows the results obtained under the weak (dark blue) and
strict (light blue) exogeneity restrictions when $(\beta _{0},\gamma
_{0})=(1,0.5)$. These outer regions are quite informative even in this simple case in which $T=3$. Having larger $T$ or taking more than $5000$ pseudo-random draws would deliver tighter bounds.

\begin{figure}[htbp]
\centering 
\includegraphics[trim = {4cm 0.5cm 0cm 0}, scale = 0.35]{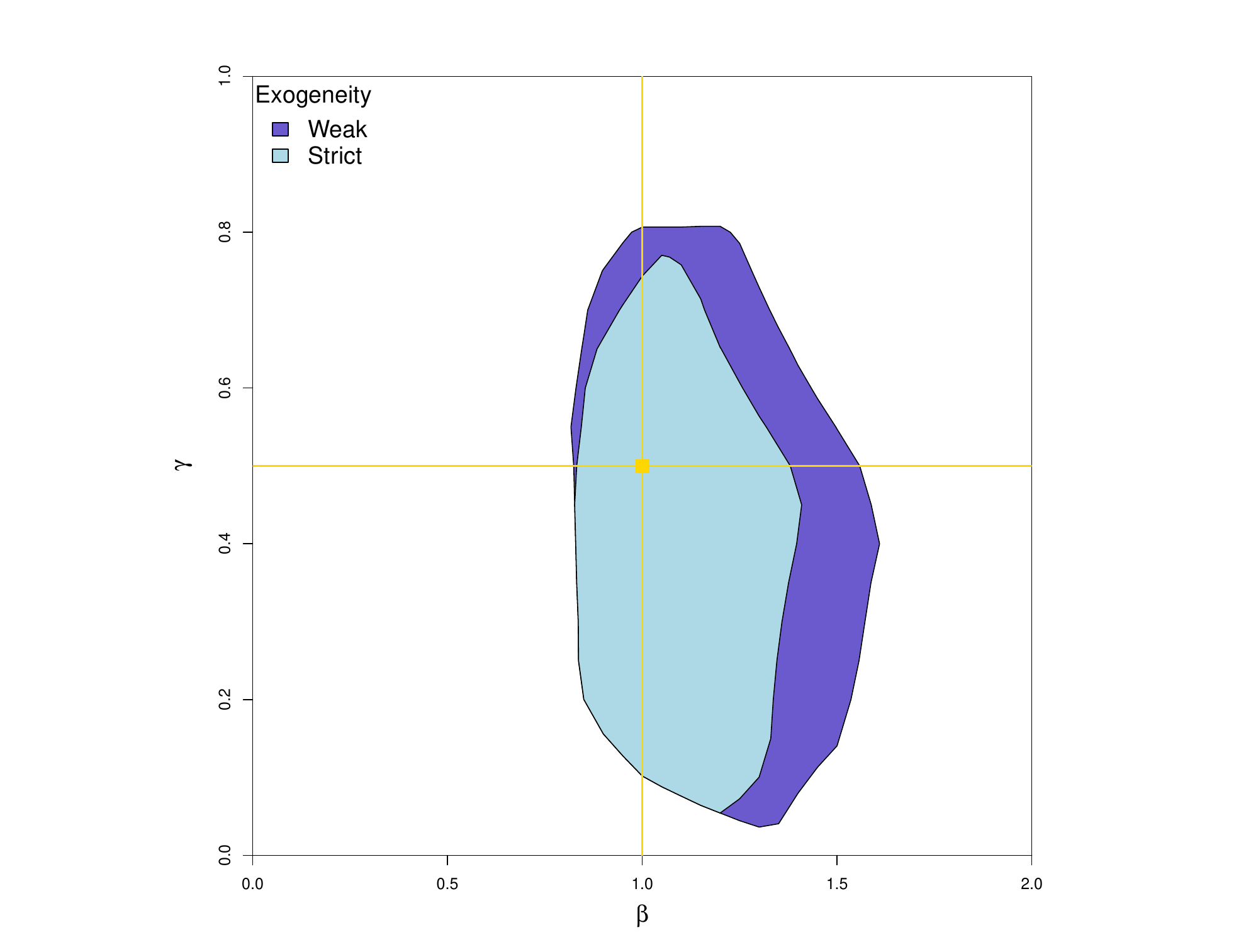}
\caption{Bounds on $\left(\beta,\gamma\right)$ in the CES Model for the data generation process described in Section \ref{Sec: Numerical Illustrations} under weak and strict exogeneity restrictions. The lines and point plotted in yellow show the values of $\beta$ and $\gamma$ in the process generating the distribution of $(Y,X)$ used in this illustration.}
\label{Figure: Numerical Illustration}
\end{figure}

\section{Discussion and concluding remarks}\label{sec: conclusion}

In the econometrics and the statistics literature the incidental parameter
problem arising with short panels has mostly been subject to analysis using particular parametric specifications of distributions of outcomes or
unobservables. Notable examples are \cite{Neyman/Scott:48}, \cite{honore2000panel}, \cite{Lancaster:00}, and \cite{Bonhomme:12}. A problem for practicing researchers is: which
distribution to choose - economic reasoning and context usually offers little guidance, and the
literature says little about the consequences of an unsuitable choice.

A notable exception is the work based on the stationarity restrictions
introduced in \cite{manski1987semiparametric}. In that work no parametric restrictions are
placed on probability distributions, but the approach is not universally applicable.

The situation in the year 2000 was summarized by Tony Lancaster as follows:\footnote{\cite{Lancaster:00}, page 404.} \textquotedblleft The absence of a method
guaranteed to work in a large class of econometric models means that any
paper on [the incidental parameter problem] must be a catalogue of
examples\textquotedblright . Little has changed in the years that followed.
The projection approach introduced in this paper fills this gap,
delivering a universally applicable solution to the incidental parameter
problem.

Taking this projection approach, incidental parameters in any number are
projected away from the space of observed and unobserved variables. The
original model specification then delivers correspondences specifying the
feasible combinations of the remaining variables. Classical
\textquotedblleft fixed effects\textquotedblright , unobserved initial
conditions and missing data are examples of variables that can be treated in
this way.

Projection delivers an incomplete model whose identifying power can be
determined by extension of available methods, for example as developed for
the analysis of Generalized Instrumental Variable models in Chesher and
Rosen (2017). Estimation and inference using the resulting characterizations
of identified sets is off-the-shelf.

With the incidental parameters projected away, robust econometric analysis
can proceed absent restrictions on their joint probability distribution with
other variables. Importantly, progress can be made using nonparametric
specifications of the distribution of the unobserved variables that vary
within observational units. For example, mean and conditional mean
restrictions can be employed as in the CES production function example of
Section \ref{Sec: continuous outcomes} and quantile independence restrictions can be used as described in
Section 4.

Four examples of application to econometric models have been set out in this
paper. More can be found in the online working paper Chesher, Rosen and
Zhang (2024), which includes applications to models admitting multiple indexes, for example, models of multiple discrete choice and simultaneous binary response.

Endogenous explanatory variables are easily accommodated following the GIV
analysis of Chesher and Rosen (2017). All endogenous variables are placed in
the list of outcomes, $Y$, and restrictions suitable for the context are
imposed on the distribution of within-observation-unit-varying $U$ and
explanatory variables, $X$. It is straightforward to impose \emph{weak}
exogeneity restrictions, for example, in dynamic panels requiring that for
all $t$, $(X_{1},\dots ,X_{t})$ and $(U_{t},\dots ,U_{T})$ satisfy some
suitable-for-context independence restriction while allowing feedback from
historic shocks and outcomes to affect the determination of future $X$
values.

Finally, the results of this paper can be useful for conducting sensitivity
analysis and specification testing. The identified sets delivered by this
paper's  models that place no restriction on the distribution of
unobservable unit-specific effects will contain the structures identified by
more restrictive models if their restrictions are satisfied by the process
under study, as captured in the distribution of observable variables the
process delivers. The analysis set out here can show how sensitive the
findings obtained using those more restrictive models are to relaxation of
their additional restrictions. It may be found that estimation employing a
point-identifying model delivers a structure outside an estimator of the
identified set obtained using a less restrictive model of the type studied
in this paper. That will suggest the more restrictive model is misspecified.
Formal development of such specification tests is a potentially fruitful 
topic for future research.

\bibliographystyle{econsoc}
\bibliography{panelgiv}

\begin{appendices}
\section{Proofs}
\noindent\textbf{Proof of Proposition \ref{theorem: profile V}}.
\textit{First to be shown is that the definitional expression of $\mathcal{I}(\mathcal{M},F_{YX})$ in \eqref{ID set definition equation} is equivalent to the expression in \eqref{Ystar selectionability}. Let $\left(f,G_{U|X}(\cdot|\cdot)\right)$ be a member of the set defined in \eqref{ID set definition equation}. Then there exist $\tilde{U}$ and $\tilde{V}$ and $\tilde{Y}$ such that for almost every $x \in \mathcal{R}_X$, conditional on $X=x$: (i) $\tilde{U} \sim G_{U|X}(\cdot|x)$, (ii) $\tilde{Y} \in \mathcal{Y}(\tilde{U},\tilde{V},X;f)$, and (iii) $\tilde{Y}\sim F_{Y|X}(\cdot|x)$. Condition (ii) implies that $\tilde{Y} \in \mathcal{Y}(\tilde{U},X;f)$. This with (i) and (iii) implies that $\left(f,G_{U|X}(\cdot|\cdot)\right)$ is an element of the set defined in \eqref{Ystar selectionability}.}

\textit{Now for the other direction let $\left(f,G_{U|X}(\cdot|\cdot)\right)$ be an element of the set defined in \eqref{Ystar selectionability}. Then there exist $\tilde{Y}$ and $\tilde{U}$ such that for almost every $x\in\mathcal{R}_X$, conditional on $X=x$: (i) $\tilde{Y}\sim F_{Y|X}(\cdot|x)$, and (ii) $\tilde{Y} \in \mathcal{Y}(\tilde{U},X;f)$, where $\tilde{U}|X=x \sim G_{U|X}(\cdot|x)$. Condition (ii) implies that for each realization of $\tilde{U}$ there exists $v\in\mathcal{R}_V$ such that $\tilde{Y}\in\mathcal{Y}(\tilde{U},v,X;f)$. This further implies that there exists a random variable $\tilde{V}$ that assigns mass to such values of $v$, i.e. $\tilde{Y}\in\mathcal{Y}(\tilde{U},\tilde{V},X;f)$. Thus $\left(f,G_{U|X}(\cdot|\cdot)\right)$ is in the set defined by \eqref{ID set definition equation}, completing the proof that \eqref{ID set definition equation} and \eqref{Ystar selectionability} are equivalent characterizations of $\mathcal{I}(\mathcal{M},F_{YX})$. The equivalence of \eqref{Ystar selectionability} and \eqref{Ustar selectionability} follows directly from relation \eqref{dual UYset relationship} and the definition of selectionability.} \qed

\begin{lemma}\label{Lemma RCS}
	Let Restriction RCS hold. For any $\left(f,G_{U|X}(\cdot|\cdot)\right)\in \mathcal{M}$, random vector $\tilde{U}$ defined on $\left( \Omega ,\mathsf{L},\mathbb{P}\right) $ with conditional distribution $G_{U|X}(\cdot|x)$ for almost every $x$  is a measurable selection of $\mathcal{U}(Y,X;f)$ if and only if the same random vector $\tilde{U}$ is a measurable selection of $\mathsf{cl}\left(\mathcal{U}(Y,X;f)\right) $.
\end{lemma}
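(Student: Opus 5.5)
The plan is to prove the two directions separately. The "only if" direction is immediate from the inclusion $\mathcal{U}(y,x;f)\subseteq \mathsf{cl}\left(\mathcal{U}(y,x;f)\right)$ for every $(y,x)$. If $\tilde{U}(\omega)\in\mathcal{U}(Y(\omega),X(\omega);f)$ for almost all $\omega$, then also $\tilde{U}(\omega)\in\mathsf{cl}\left(\mathcal{U}(Y(\omega),X(\omega);f)\right)$ for almost all $\omega$. Measurability of $\tilde{U}$ is unchanged, so $\tilde{U}$ is a measurable selection of the closure.

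For the "if" direction, suppose $\tilde{U}$ is a measurable selection of $\mathsf{cl}\left(\mathcal{U}(Y,X;f)\right)$ with $\tilde{U}\mid X=x\sim G_{U|X}(\cdot|x)$. Write the exceptional event as
\begin{equation*}
E\equiv\left\{\omega: \tilde{U}(\omega)\in \mathsf{cl}\left(\mathcal{U}(Y(\omega),X(\omega);f)\right)\setminus\mathcal{U}(Y(\omega),X(\omega);f)\right\}.
\end{equation*}
Up to a null set, the event $\tilde{U}\notin\mathcal{U}(Y,X;f)$ coincides with $E$, so it suffices to show $\mathbb{P}(E)=0$.

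To do this I condition on $(Y,X)=(y,x)$ and write $D(y,x)\equiv \mathsf{cl}\left(\mathcal{U}(y,x;f)\right)\setminus\mathcal{U}(y,x;f)$. Then $\mathbb{P}(E)=\int \mathbb{P}[\tilde{U}\in D(y,x)\mid Y=y,X=x]\,dF_{YX}(y,x)$. The difficulty is that Restriction RCS controls the conditional law of $\tilde{U}$ given $X$ only, namely $G_{U|X}(D(y,x)|x)=0$. It says nothing directly about the law given $(Y,X)$.

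I would handle this with the following bound. For a.e. $x$ and every fixed $y$,
\begin{equation*}
\mathbb{P}[\tilde{U}\in D(y,x),\,Y=y\mid X=x]\le \mathbb{P}[\tilde{U}\in D(y,x)\mid X=x]=G_{U|X}(D(y,x)|x)=0.
\end{equation*}
When $Y$ is discrete, as in the examples where RCS is invoked, summing over the countably many support points $y$ gives $\mathbb{P}(E\mid X=x)=0$, and integrating over $x$ gives $\mathbb{P}(E)=0$.

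For general $Y$ this pointwise argument fails, and this is the main obstacle. There I would read Restriction RCS's ``a.e. $(y,x)$'' as the statement that the random null set $D(Y,X)$ has $G_{U|X}(\cdot|X)$-measure zero. I would then use the joint measurability of the graph of the random closed set $\mathsf{cl}\,\mathcal{U}(Y,X;f)$ and of $\mathcal{U}(Y,X;f)$ to make $E$ measurable, and argue via Fubini on the product of the law of $(Y,X)$ and $G_{U|X}$. If necessary I would interpret RCS in this joint form, which is how it is used in the discrete-outcome models of the paper.

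Having established $\mathbb{P}(E)=0$, it follows that $\tilde{U}\in\mathcal{U}(Y,X;f)$ almost surely, so $\tilde{U}$ is a measurable selection of $\mathcal{U}(Y,X;f)$, completing the equivalence.
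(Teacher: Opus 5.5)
Your skeleton is the same as the paper's: the ``only if'' direction follows from $\mathcal{U}(y,x;f)\subseteq\mathsf{cl}(\mathcal{U}(y,x;f))$, and the ``if'' direction follows by showing the exceptional event $E$ is null. The paper simply asserts that Restriction RCS makes $E$ null. You correctly notice that this is not automatic, since RCS constrains only $G_{U|X}$ while a selection $\tilde{U}$ is coupled with $Y$. Your bound $\mathbb{P}[\tilde{U}\in D(y,x),\,Y=y\mid X=x]\le G_{U|X}(D(y,x)|x)$, summed over the countable support of $Y$, is a valid justification. Points with $\mathbb{P}[Y=y\mid X=x]=0$ contribute nothing, so RCS holding only for a.e.\ $(y,x)$ suffices. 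This covers every use the paper makes of the lemma, namely Corollary 1 in the binary and ordered response models.

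The general-$Y$ step, however, is a genuine gap. Fubini on the product of $F_{YX}$ and $G_{U|X}$ computes the probability of $E$ as if $\tilde{U}$ were conditionally independent of $Y$ given $X$. The actual law of $(\tilde{U},Y)$ given $X$ is some other coupling of the same marginals. Reading RCS as saying that $D(Y,X)$ is $G_{U|X}(\cdot|X)$-null almost surely merely restates RCS and says nothing about that coupling. No argument can close this, because the lemma fails for continuous $Y$, and so does the paper's one-line proof.

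Take $T=1$, degenerate $X$, and $Y=U+V$ with $\mathcal{R}_V=(0,\infty)$, so that $\mathcal{U}(y,x;f)=(-\infty,y)$ and $D(y,x)=\{y\}$. Let $G_{U|X}(\cdot|x)$ and the law of $Y$ both be uniform on $[0,1]$. RCS holds because $G_{U|X}(\{y\}|x)=0$ for every $y$. The variable $\tilde{U}\equiv Y$ has law $G_{U|X}$ and selects $\mathsf{cl}(\mathcal{U}(Y,X;f))=(-\infty,Y]$, yet $\tilde{U}\notin\mathcal{U}(Y,X;f)$ surely. In fact no selection of $(-\infty,Y)$ with this law exists, since $\tilde{U}<Y$ almost surely with identical integrable marginals is impossible. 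So even the existential version underlying Corollary 1 breaks.

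Beyond countable support you therefore need a genuinely stronger hypothesis. One option is $G_{U|X}\bigl(\bigcup_{y\in\mathcal{R}_Y}D(y,x)\mid x\bigr)=0$ for a.e.\ $x$, with the union measurable. The event $E$ is contained in $\{\tilde{U}\in\bigcup_{y}D(y,X)\}$, whose conditional probability given $X$ is fixed by $G_{U|X}$. This condition therefore kills $E$ under every coupling, and it reduces to RCS when $Y$ is discrete.
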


\noindent\textbf{Proof of Lemma \ref{Lemma RCS}}.
\textit{It is immediate that if $\tilde{U}$ is a measurable selection of $\mathcal{U}(Y,X;f)$ then it is also a measurable selection of $\mathsf{cl}\left(\mathcal{U}(Y,X;f)\right)$, so only the reverse implication needs to be shown. Thus, suppose that $\tilde{U}$ is a measurable selection of $\mathsf{cl}\left(\mathcal{U}(Y,X;f)\right)$. Under Restriction RCS the event that $\tilde{U}$ is an element of $\mathsf{cl}\left(\mathcal{U}(Y,X;f)\right)$ but not an element of $\mathcal{U}(Y,X;f)$ occurs with zero probability. Since $\tilde{U}$ is a measurable selection of $\mathsf{cl}\left(\mathcal{U}(Y,X;f)\right)$ we have that $\tilde{U}\in \mathsf{cl}\left(\mathcal{U}(Y,X;f)\right)$ almost surely. It follows that $\tilde{U}\in \mathcal{U}(Y,X;f)$ almost surely and thus $\tilde{U}$ is a measurable selection of $\mathcal{U}(Y,X;f)$.} \qed
\\

\noindent\textbf{Proof of Corollary \ref{Containment Corollary}}. \textit{This follows directly from Proposition \ref{theorem: profile V} and Lemma \ref{Lemma RCS}, together with Corollary 1 of \cite{Chesher/Rosen:17}.}\qed
\\

\noindent\textbf{Proof of Proposition \ref{Theorem: CES Model Under M}}. \textit{First it will be shown that the identified set for $f$ is $\mathcal{F}_I = \left\{f\in\mathcal{F}: \mathbf{0} \in \mathbb{E}_I\left[\mathcal{Q}(Y,X;f)\right]\right\}$. Under Restriction M, $\mathcal{Q}(Y,X;\theta)$ has an integrable selection and is closed, so it is an integrable random closed set, and there exists random vector $Q$ such that $Q \in \mathcal{Q}(Y,X;f)$ almost surely, i.e. a measurable selection of $\mathcal{Q}(Y,X;f)$, with mean $\mathbf{0}_J$ if and only if $\mathbf{0}_J \in \mathbb{E}_I\left[\mathcal{Q}(Y,X;f)\right]$. By definition of $ \mathcal{Q}(Y,X;f)$, the set of measurable selections of $\mathcal{Q}(Y,X;f)$ is the set of random vectors $(Z_1^{\prime}\tilde{U}_1,...,Z_T^{\prime}\tilde{U}_T)$ such that $\tilde{U}$ is a measurable selection of $ \mathcal{U}(Y,X;f)$. Thus there exists $Q \in \mathcal{Q}(Y,X;f)$ almost surely with $E[Q] = \mathbf{0}_J$ if and only if there exists $\tilde{U} \in \mathcal{U}(Y,X;f)$ almost surely with $E[Z_t^{\prime}\tilde{U}_t] = \mathbf{0}_{J_t}$ for all $t$. Thus $\mathcal{F}_I$ is the identified set for $f$.} 

\textit{The moment closure of $\mathcal{F}_I$ is $\overline{\mathcal{F}_I} = \left\{f\in\mathcal{F}: \mathbf{0} \in \mathbb{E}\left[\mathcal{Q}(Y,X;f)\right]\right\}$, where $\mathbb{E}\left[\mathcal{Q}(Y,X;f)\right] = \mathsf{cl}\left(\mathbb{E}_I\left[\mathcal{Q}(Y,X;f)\right]\right)$. Under Restriction PM, $\left( \Omega,\mathsf{L},\mathbb{P}\right)$ is nonatomic and by Theorem 2.1.26 of \cite{Molchanov:17} we have that $\mathbb{E}\left[\mathcal{Q}(Y,X;\theta)\right]$ is convex. Because $\mathbb{E}\left[\mathcal{Q}(Y,X;f)\right]$ is convex, $\mathbf{0}_J \in \mathbb{E}\left[\mathcal{Q}(Y,X;f)\right]$ if and only if $0 \leq h(\mathbb{E}\left[\mathcal{Q}(Y,X;\theta)\right],r)$ for all $r \in \mathbb{R}^J$. Then \eqref{moment closure set} follows because the support function is positive homogeneous and  $h(\mathbb{E}\left[\mathcal{Q}(Y,X;\theta)\right],r) = E\left[h(\mathcal{Q}(Y,X;\theta),r) \right]$ by Theorem 2.1.35 of \cite{Molchanov:17}. Finally, if $\mathbb{E}_I\left[\mathcal{Q}(Y,X;f)\right]$ is closed then it is equal to $\mathbb{E}\left[\mathcal{Q}(Y,X;f)\right]$ and consequently $\mathcal{F}_I = \overline{\mathcal{F}_I}$.} \qed

\begin{proposition}\label{Prop CES sharpness}
	Let the restrictions of Proposition \ref{Theorem: CES Model Under M} hold with the CES specification in \eqref{CES model for y}. Suppose in addition that
	\begin{equation} \label{E1 bound}
	\forall t\in [T],\quad E\left[\sup_{s \in [-\infty,1]}\lVert Z_t\left(Y_t-\beta \log g(X_{t1},X_{t2},\gamma,s) \right) \rVert \right]	< \infty\text{.}
	\end{equation}
Then $\mathbb{E}_I\left[\mathcal{Q}(Y,X;\theta) \right]$ is closed, and the moment closure of the identified set, $\Theta^{\ast}$, is sharp.
\end{proposition}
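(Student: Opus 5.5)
The plan is to exploit the additive structure of the CES level set. Let $z_t=Z_t(x)$ and $b(x)\equiv (z_1,\dots,z_T)\in\mathbb{R}^J$. For fixed $(y,x)$, write
\[
a(s;y,x)\equiv \bigl(z_1(y_1-\beta\log g(x_{11},x_{12},\gamma,s)),\dots,z_T(y_T-\beta\log g(x_{T1},x_{T2},\gamma,s))\bigr).
\]
Then by \eqref{CES moment set},
\[
\mathcal{Q}(y,x;\theta)=\mathcal{A}(y,x)+\mathcal{L}(x),\qquad \mathcal{A}(y,x)\equiv\{a(s;y,x):s\in[-\infty,1]\},\qquad \mathcal{L}(x)\equiv\{-c\,b(x):c\in\mathbb{R}\}.
\]
So $\mathcal{Q}$ is the sum of a compact random set and a random line through the origin. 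I will show that $\mathbb{E}_I[\mathcal{Q}]$ equals a compact set plus a linear subspace of $\mathbb{R}^J$, and is therefore closed.

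\textbf{Step 1: $\mathcal{A}$ is compact and integrably bounded.} The map $s\mapsto \log g(l,k,\gamma,s)$ is continuous on the compact set $[-\infty,1]$ for $l,k>0$. This uses that the generalized mean converges to $\min\{l,k\}$ as $s\to-\infty$ and to $l^\gamma k^{1-\gamma}$ at $s=0$. Hence $\mathcal{A}(y,x)$ is the continuous image of a compact set, so it is compact. Condition \eqref{E1 bound} says exactly that $\sup_{q\in\mathcal{A}(Y,X)}\lVert q\rVert$ is integrable, so $\mathcal{A}(Y,X)$ is an integrably bounded random compact set. By the standard result in \cite{Molchanov:17} (Section 2.1), using nonatomicity from Restriction PM, $\mathbb{E}_I[\mathcal{A}]$ is compact and convex.

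\textbf{Step 2: the line part gives a subspace.} Let $\mathcal{K}\equiv\{E[\tilde C\, b(X)]:\tilde C$ a real random variable with $\tilde C b(X)$ integrable$\}$. This set is closed under addition and under multiplication by any real scalar. It is therefore a linear subspace of $\mathbb{R}^J$, and hence closed.

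\textbf{Step 3: decompose the Aumann integral.} I will show $\mathbb{E}_I[\mathcal{Q}]=\mathbb{E}_I[\mathcal{A}]+\mathcal{K}$.

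For the inclusion $\supseteq$, take a selection $A$ of $\mathcal{A}$ and an admissible $\tilde C$. Then $A-\tilde C b(X)$ is an integrable selection of $\mathcal{Q}$, and its mean lies in the required sum.

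For the inclusion $\subseteq$, take an integrable selection $Q$ of $\mathcal{Q}$. The set-valued map
\[
\omega\mapsto\{(s,c)\in[-\infty,1]\times\mathbb{R}:a(s;Y,X)-c\,b(X)=Q\}
\]
is nonempty and closed by continuity in $(s,c)$. It is also measurable, being defined by a Carath\'eodory equation. The Kuratowski--Ryll-Nardzewski theorem then gives a measurable selection $(\tilde S,\tilde C)$. Set $A\equiv a(\tilde S;Y,X)$, which is integrable by \eqref{E1 bound}. Then $\tilde C b(X)=A-Q$ is integrable, so $E[Q]=E[A]-E[\tilde C b(X)]\in\mathbb{E}_I[\mathcal{A}]+\mathcal{K}$.

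\textbf{Step 4: conclude.} A compact set plus a closed set in $\mathbb{R}^J$ is closed, so $\mathbb{E}_I[\mathcal{Q}(Y,X;\theta)]$ is closed. The final clause of Proposition \ref{Theorem: CES Model Under M} then gives $\overline{\mathcal{F}_I}=\mathcal{F}_I$, that is, $\Theta^\ast$ is sharp.

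I expect the main obstacle to be the measurable-selection step in Step 3. The rest is routine, apart from the point in Step 1 that continuity at $s=-\infty$ must hold so that $\mathcal{A}$ is genuinely compact rather than merely bounded. If the measurability of the preimage map is delicate, the fallback is to work with the parametrization directly: use a countable dense set of $s$ values and the closedness of the graph to build $(\tilde S,\tilde C)$ by a measurable argmin construction.
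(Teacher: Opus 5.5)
Your proposal is correct and follows essentially the same route as the paper. Like the paper, you write $\mathbb{E}_I[\mathcal{Q}(Y,X;\theta)]$ as the Aumann integral of the integrably bounded random compact set $\{A(\theta,s):s\in[-\infty,1]\}$, which is compact by Molchanov's theorem, plus the linear subspace $\{E[-CZ]\}$, conclude closedness since compact plus closed is closed, and then invoke the final clause of Proposition \ref{Theorem: CES Model Under M}. Your explicit measurable-selection construction of $(\tilde S,\tilde C)$ and the continuity check for $\log g$ at $s=-\infty$ supply details that the paper's proof takes for granted.
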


\noindent\textbf{Proof of Proposition \ref{Prop CES sharpness}}.
\textit{By definition}
\begin{equation*}
	\mathbb{E}_I\left[ \mathcal{Q}(Y,X;\theta) \right] = \left\{q \in \mathbb{R}^J: q= E\left[Q\right] \text{ for some $Q \in \mathbf{L}^1(\mathcal{Q}(Y,X;\theta))$ } \right\}\text{.}
\end{equation*}
Let $A_t(\theta,s) \equiv Z_t\left(Y_t-\beta \log g(X_{t1},X_{t2},\gamma,s) \right)$, $A(\theta,s)\equiv(A_1(\theta,s),...,A_T(\theta,s)) $, and $J$-element vector $Z \equiv (Z_1,...,Z_T)$. Then $Q \in \mathbf{L}^1(\mathcal{Q}(Y,X;\theta))$ implies that for some measurable $(S,C)$ with support in $[-\infty,1] \times \mathbb{R}$, $Q=A(\theta,S)-CZ.$ Since $E\left[A(\theta,S) \right]$ is bounded by \eqref{E1 bound}, $E\left[Q\right] = E\left[A(\theta,S)\right] - E\left[CZ\right]$, and $\mathbb{E}_I\left[ \mathcal{Q}(Y,X;\theta) \right] = \mathcal{E}_1 + \mathcal{E}_2$ where
\begin{align*}
	\mathcal{E}_1 &= \left\{E\left[A(\theta,S) \right]:S\in [-\infty,1] \text{ and } S \text{ measurable} \right\} = \mathbb{E}_I\left[\left\{A(\theta,s):s\in [-\infty, 1] \right\} \right] \text{,} \\
	\mathcal{E}_2 &= \left\{E\left[-C Z \right]:E\left[\lvert C \rvert \lVert Z \rVert \right] < \infty   \text{ and } C \text{ measurable} \right\}\text{.}
\end{align*}
The set $\left\{A(\theta,s):s\in [-\infty, 1] \right\}$ is an integrably bounded random compact set by \eqref{E1 bound}, so $\mathcal{E}_1$ is compact by Theorem 2.1.38 of \cite{Molchanov:17}. Both $\mathcal{E}_1$ and $\mathcal{E}_2$  are convex because the underlying probability space is nonatomic by Theorem 2.1.26 of \cite{Molchanov:17}. The set $ \mathcal{E}_2$ is a linear subspace of $\mathbb{R}^J$ and is therefore closed. The sum of a compact convex set and a closed convex set is closed, see e.g. Corollary 9.1.2 of \cite{Rockafellar:70}, so $\mathbb{E}_I\left[ \mathcal{Q}(Y,X;\theta) \right]$ is closed and equal to $\mathbb{E}\left[ \mathcal{Q}(Y,X;\theta) \right]$, completing the proof. \qed

\end{appendices}

\end{document}